\documentclass[twoside, 12pt]{article}
\usepackage{pgfplots,comment,verbatim}
\usepackage[a4paper,top=1.18in, bottom=1.18in, left=1.2in, right=1.2in,bindingoffset=0mm]{geometry}

\usepackage{aditya_ord}
\tikzset{
    >=latex,
    pil/.style={
            draw,
      <-, 
      decorate,
      decoration={snake,,amplitude=.02cm, pre length=.2cm,post length=.2cm,}
              }}
\hypersetup{colorlinks=true,linkcolor=red, citecolor= blue, urlcolor=darkred}

\definecolor{BluishGreen}{RGB}{0,158,115}

\title{{\bf \sc {{\color{darkblue}Similarity of Information in Games}}}\footnote{This research was supported in part by National Science Foundation award 2417694. Kuvalekar thanks the Indian School of Business for their hospitality. We are grateful to Nageeb Ali, Dirk Bergemann, Geoffroy De Clippel, Duarte Goncalves, Nima Haghpanah, Johannes H\"orner, Elliot Lipnowski, Michael Mandler, Meg Meyer, Dilip Mookherjee, Stephen Morris, Alessandro Pavan, Ludvig Sinander, Bruno Strulovici, and  conference audiences and seminar participants at Bonn University, Brown University, Carnegie Mellon University, Cornell University, Indian Statistical Institute Delhi, National University of Singapore,  Ohio State University, Oxford University, Paris School of Economics, Royal Holloway University, Toulouse School of Economics, University of Rochester, University of Surrey, Duke \& UNC, and Penn State for their helpful comments and suggestions. Basak: Indiana University, Kelley School of Business, email: dbasak@iu.edu; Deb: New York University, email: joyee.deb@nyu.edu, Kuvalekar: University of Essex, email:  a.kuvalekar@essex.ac.uk.} \\}

\author{                   
\begin{minipage}{0.3\textwidth}\centering 
Deepal Basak\\ \centering \it \small Indiana University
\end{minipage}                  
\begin{minipage}{0.3\textwidth}\centering 
Joyee Deb   \\ \centering \it \small New York University
\end{minipage}                  
\begin{minipage}{0.3\textwidth}\centering 
Aditya Kuvalekar%
\\ \centering \it \small University of Essex
\end{minipage}   
}

\date{\today}

\begin{document}
\maketitle
\bigskip
\begin{center}
\end{center}

\vspace{-0.4cm}

\begin{abstract}

Algorithmic content targeting homogenizes information, with implications for strategic interactions. For example, this increased homogenization was arguably responsible for the run on the Silicon Valley Bank. We argue that existing measures of similarity are inappropriate for studying games—especially coordination games—because they do not discipline agents' conditional beliefs. We propose a class of stochastic orders, Concentration Along the Diagonal (CAD), built on agents' conditional beliefs. In canonical binary-action coordination games, greater CAD-similarity is both necessary and sufficient for strategic similarity—agents adopt the same strategy. We further demonstrate CAD's applicability in congestion games, collective action, and second-price auctions. 

\bigskip

{\it Keywords:} 
coordination games, information similarity. 
\vspace{0.2cm}

\textit{JEL Codes:} D82, D83
\end{abstract}
\section{Introduction}
In recent times, algorithmic targeting of content has resulted in increased homogenization of information among like-minded individuals: People with similar demographic traits or preferences are exposed to the same content.\footnote{A large literature documents the decrease in diversity of content consumption across users as a result of recommendation algorithms. See, for example, \cite{nechushtai2024more, anwar2024filter, nguyen2014exploring,chaney2018algorithmic,aridor2020deconstructing} for evidence of homogenization in various contexts from online purchases to news consumption.} This homogenization can affect agent behavior in strategic situations. For instance, both empirical research and news media have highlighted how access to the same information on Twitter encouraged many people to withdraw deposits, precipitating a run on Silicon Valley Bank in 2023.\footnote{See 
\href{https://www.imf.org/en/Publications/fandd/issues/2024/03/Containing-Technology-driven-bank-runs-Krogstrup-Sangill-Sicard}{1}, 
\href{https://fortune.com/2023/03/30/how-did-silicon-valley-bank-fail-whatsapp-chats-messages-twitter}{2}, 
\href{https://www.theguardian.com/business/2023/mar/16/the-first-twitter-fuelled-bank-run-how-social-media-compounded-svbs-collapse}{3}, or  \cite{cookson2023social,gam2023does} on bank runs.} The narrative is that people accessed the same information about the health of the bank and reacted to this information in the same way, causing a run which may not have been possible without the homogenization of information. 
Motivated by this discourse, we seek to systematically investigate how increased information similarity across agents affects their strategic interactions. 

We present a simple bank-run example that motivates the central question of our paper. This example demonstrates that increasing information similarity across agents in some natural ways---like increasing correlation---does not increase the expected number of people that would run on the bank! Motivated by this observation, we ask: What type of information homogenization encourages agents to follow the same strategy as others in the face of incomplete information in coordination games such as bank runs? Our main contribution is to answer this question by proposing a new class of stochastic orders—Concentration Along the Diagonal (CAD)—to compare similarity of information.

\subsection*{\small{Correlation and Coordination: A puzzle}} Two agents each decide whether to stay (S) or run (R) on a bank. The payoff from running is $0$. The payoff from staying is $\theta$ if the other player stays, and $\theta-1$ if the other runs, where $\theta$ is an underlying state, say the health of the bank, which is unknown:~$\theta \in \left\{\frac{3}{2},\frac{1}{2}, -\frac{1}{2}\right\}$. 
\begin{center}
\begin{tabular}{|c|c|c|}
\hline 
& Stay & Run \\
\hline
Stay & $\theta, \theta$ & $\theta-1, 0$ \\ 
\hline 
Run & $0, \theta-1$ & $0, 0$\\
\hline
\end{tabular}
\end{center}
Notice that staying is dominant if $\state = \frac32$ and running is dominant if $\state = -\frac12$. The state~$\theta$ is drawn from the prior~$P(\theta=\frac{3}{2})=P(\theta=-\frac{1}{2})=\epsilon$ and $P(\theta=\frac{1}{2})=1-2\epsilon$. Each agent $i$ sees a private signal~$s_i$ about the realized~$\theta$. The signals~ $(s_1,s_2)$ in state~$\theta$ are drawn from some joint distribution~$\distq^\theta$ over $\left\{-\frac{1}{2}, \frac{1}{2},\frac{3}{2}\right\}\times \left\{-\frac{1}{2}, \frac{1}{2},\frac{3}{2}\right\}$. Assume that signals are sufficiently accurate so that $R$ is strictly dominated for an agent with signal $s_i=\frac{3}{2}$ and $S$ is strictly dominated for an agent with signal $s_i=-\frac{1}{2}$.\footnote{A sufficient condition is  $P(s_i=\theta|\theta)=p$ and $P(s_i=\theta'|\theta)=\frac{1-p}{2}$ for $\theta'\neq \theta$ with $p>\frac{3-3\epsilon}{3-\epsilon}$.}

Consider the largest run on the bank possible in any symmetric pure strategy equilibrium. This would involve each  agent~$i$ staying if and only if she gets a signal~$s_i=\frac{3}{2}$. Essentially, the existence of such an equilibrium depends only on whether a player with a signal $\frac12$ wishes to run or not. This, in turn, solely depends on how likely it is, according to this player, that the other player receives a signal $\frac12$ or $\frac{-1}{2}$. 

Suppose we started with an information structure where such an equilibrium exists and make 
players' signals more correlated. 
Intuitively, we should expect more correlated information to cause more people to run on the bank, because, conditional on their own private signal, each agent now thinks that it is more likely that others have also seen a similar signal and will run on the bank. For instance, at the extreme, if all the agents see exactly the same information, then for any observed signal, if all others are willing to run on the bank, and running undominated, then any individual will also run on the bank. Surprisingly, it is not true that more correlated information makes  larger bank runs possible.

The figure below shows two joint distributions~$\distq^{\theta=\frac{1}{2}}$ and $\dist^{\theta=\frac{1}{2}}$ of signals in state~$\theta=\frac{1}{2}$,  
where $\dist^{\theta=\frac{1}{2}}$ has a higher correlation.\footnote{For simplicity, assume that the signals are conditionally independent when $\theta\in\{-\frac{1}{2},\frac{3}{2}\}$.} For example, the $-\alpha$ and $+\alpha$ in the top row mean that the probability mass on $(-\frac{1}{2},\frac{3}{2})$ under $\dist^{\theta=\frac{1}{2}}$ is $\alpha$ less than under $\distq^{\theta=\frac{1}{2}}$, and the probability mass on $(\frac{1}{2},\frac{3}{2})$ under $\dist^{\theta=\frac{1}{2}}$ is $\alpha$ more than under $\distq^{\theta=\frac{1}{2}}$. 
 It is easy to verify that $\dist$ not only has higher correlation than $\distq$, but also is  greater than $\distq$ in the supermodular order, positive quadrant dependence order or concordance orders, for any positive and feasible $\alpha$.
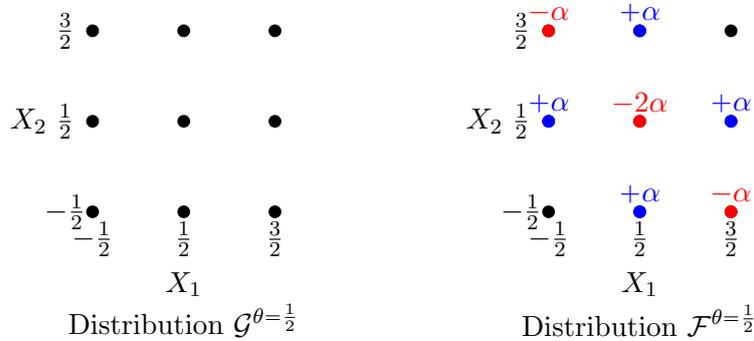
\begin{figure}[h!]
\centering
\begin{tikzpicture}[scale=1.2]
    \foreach \x in {-0.5,0.5,1.5} {
        \foreach \y in {-0.5,0.5,1.5} {
            \fill[black] (\x,\y) circle[radius=2pt];
        }
    }
    \node at (-0.5,-0.8) {\small $-\frac{1}{2}$};
    \node at (0.5,-0.8) {\small $\frac{1}{2}$};
    \node at (1.5,-0.8) {\small $\frac{3}{2}$};
    \node at (-0.8,-0.5) {\small $-\frac{1}{2}$};
    \node at (-0.8,0.5) {\small $\frac{1}{2}$};
    \node at (-0.8,1.5) {\small $\frac{3}{2}$};
    \node at (0.5,-1.3) {\small $X_1$}; 
    \node at (-1.2,0.5) {\small $X_2$};
    \node at (0.5,-1.7) {\small Distribution~$\distq^{\theta=\frac{1}{2}}$};

    \begin{scope}[shift={(5,0)}] 
        \foreach \x in {-0.5,0.5,1.5} {
            \foreach \y in {-0.5,0.5,1.5} {
                \fill[black] (\x,\y) circle[radius=2pt];
            }
        }
        \fill[red] (-0.5,1.5) circle[radius=2pt];
        \fill[red] (0.5,0.5) circle[radius=2pt];
        \fill[blue] (1.5,0.5) circle[radius=2pt];
        \fill[blue] (0.5,-0.5) circle[radius=2pt];
        \fill[blue] (-0.5,0.5) circle[radius=2pt];
        \fill[blue] (0.5,1.5) circle[radius=2pt];
        \fill[red] (1.5,-0.5) circle[radius=2pt]; 
        \node at (-0.5,-0.8) {\small $-\frac{1}{2}$};
        \node at (0.5,-0.8) {\small $\frac{1}{2}$};
        \node at (1.5,-0.8) {\small $\frac{3}{2}$};
        \node at (-0.8,-0.5) {\small $-\frac{1}{2}$};
        \node at (-0.8,0.5) {\small $\frac{1}{2}$};
        \node at (-0.8,1.5) {\small $\frac{3}{2}$};
        \node at (0.5,-1.3) {\small $X_1$}; 
        \node at (-1.2,0.5) {\small $X_2$};
        \node[red] at (-0.5,1.7) {\small $-\alpha$};
        \node[red] at (0.5,0.7) {\small $-2\alpha$};
        \node[blue] at (1.5,0.7) {\small $+\alpha$};
        \node[blue] at (0.5,-0.3) {\small $+\alpha$};
        \node[blue] at (-0.5,0.7) {\small $+\alpha$}; 
        \node[blue] at (0.5,1.7) {\small $+\alpha$};
        \node[red] at (1.5,-0.3) {\small $-\alpha$}; 
        \node at (0.5,-1.7) {\small Distribution~$\dist^{\theta=\frac{1}{2}}$};
    \end{scope}
\end{tikzpicture}
\caption{\small{Increasing interdependence according to existing orders ($\alpha>0$).}}
\label{fig:example}
\end{figure}
Notice that when the signals become more correlated in this way, the probability that a player with a signal $\frac12$ assigns to the other player receiving a signal $\frac{-1}{2}$ or $\frac12$---and therefore the other player running---decreases. Thus, we may no longer be able to support an equilibrium where players run on $\frac{-1}{2}$ and $\frac12$. In a nutshell, in this simple example, increasing correlation of information decreases the maximal (across all equilibria) expected number of people that would run on the bank!

Why do we get this intuitively implausible result? 
The reason is that in a game of incomplete information, players' incentive constraints---and hence their choice of actions---are affected by their \textit{conditional beliefs}, and an increase in correlation (or other commonly used stochastic orders such as supermodular order or positive quadrant dependence)\footnote{There is a large literature in statistics and economics on stochastic orders. See \cite{meyer1990interdependence,meyer2012increasing,meyer2015beyond, muller2002comparison,muller2000some}} only increases the joint probability of agents receiving similar information, but does not necessarily affect agents’ conditional beliefs in the natural way that drives more agents to run. 

Indeed, this example does not invalidate our intuition that increased information homogenization should encourage agents to follow the same strategy as others in coordination games. Rather, it highlights that correlation may not be the right notion to talk about homogenization of information, and we need to think about what type of homogenization of information causes agents to follow the same strategy as others. Formally, we want a notion of similarity to compare joint distributions of signals which satisfies the following intuitive property. In any canonical binary-action coordination game of incomplete information, 
\begin{quote}
\textit{if an agent is willing to play the same strategy as all the other agents, then she should still be willing to play that strategy, if the private information across agents became more similar under this specified notion.}
\end{quote}
In other words, the set of symmetric pure strategy Bayes-Nash equilibria should expand with greater information similarity. 

The first contribution of this paper is to introduce a new class of orders---Concentration Along the Diagonal (CAD) orders---that  rank the similarity of joint distributions by comparing \textit{conditional beliefs}, which is what affects behavior in strategic situations. Roughly speaking, an increase in a CAD order means that, conditional on receiving information, each agent believes that it is now more likely that others have also received the same information. The CAD orders indeed satisfy the above notion of increasing information similarity in strategic settings. In the main results of the paper, we establish that CAD orders not only satisfy this intuitive property, but also \textit{characterize} it: We prove that increasing information similarity in a CAD order (weakly) is \textit{equivalent} to expanding the set of symmetric equilibria in a canonical class of binary-action coordination games of incomplete information. We then use the CAD orders for other games, such as congestion, collective action, and auctions.

 Formally, 
 we define an information structure to be the joint distribution of private signals (types) that agents observe. 
 We consider finite and ordered signals. We introduce two different orders to compare these joint distributions. Our first notion is called  \emph{Concentration Along the Diagonal (CAD)}. We say an information structure~$\dist$ is more similar than $\distq$, or greater in the CAD order, if any agent $i$ believes, conditional on her realized signal $s$, that it is more likely under $\dist$ than $\distq$, that any other agent $j$ also has observed exactly the same signal $s$.\footnote{It turns out that CAD is equivalent to an order proposed by \cite{meyer1990interdependence}.} Notice that an increase in similarity, as measured by the CAD order, requires that any two agents see \emph{exactly the same signal} with a higher probability. This may be too strong in some contexts, making the orders quite incomplete. Indeed, two information structures $\dist$ and $\distq$ are not comparable even if the probability of the signals being very close in value was higher under $\dist$, and the probability of signals being very different in value was lower under $\dist$. This motivates our definition of a second order, the \emph{contour-set CAD} order, which uses the order structure of the signal set. Roughly speaking, increased similarity in the contour-set CAD order means that the signals are close to each other in value with a higher probability.

Motivated by our premise---that more similar information should expand the set of symmetric pure strategy Bayes-Nash equilibria in canonical binary action coordination games---we consider a canonical class of binary-action coordination games with multiple players, in which a player's payoff depends on the aggregate actions of other players 
and on a payoff-relevant state
. Before taking their action, players receive a signal about the state drawn according to the information structure. 
The payoff difference between taking the two actions is of the form
$$\payoff(\ractoth,\omega_i)
=\a(\omega_i)+\b(\omega_i) h(\ractoth),$$
where $\ractoth$ denotes the aggregate actions of other players, $\omega_i$ is player~$i$'s signal, $\beta(\cdot)\ge 0$ and $h$ is an increasing affine function. Here, $\omega_i$ can be a private payoff-relevant signal making this setting a private-value coordination game. We also allow common-value coordination games, in which the payoff difference depends on a common payoff-relevant state~$\omega_i=\omega$ across agents, with agents receiving private signals about $\omega$. Many important economic applications share this structure.
For the private value case, one can think about $N$ firms choosing whether or not to invest in a new technology. Each firm’s payoff from investing increases with the aggregate investment and her type that captures the value of the technology to her. An example of a common-value setting is a bank run game.\footnote{See \cite{morris2016common} for examples of both private and common value cases.}  

Our first main result, Theorem~\ref{Theorem: wCAD equivalence}, shows that increasing similarity of the information structure in the CAD order is in fact equivalent to expanding the set of symmetric equilibria in this class of private-value coordination games. To understand the argument, consider the set of signals $K$ for which agents take a certain action in equilibrium. Consider an agent who receives a signal $s\in K$. If information becomes more CAD-similar, then this agent believes that another player is more likely to see a signal in the set $K$ that contains her own signal. This makes her more optimistic about the expected number of other players who will play the same action. Therefore, she has a higher incentive to play the same strategy as others, thus expanding the set of symmetric pure strategy BNE. Conversely, if information structure changes in a way that is not more CAD-similar, then there exists a set $K$ and $s\in K$ such that an agent who receives a signal $s$ believes that another agent is less likely to see a signal in set $K$ that contains her signal. We can then construct a coordination game in which there is an equilibrium where agents play a certain action when they see a signal in set $K$. However, when information structure changes, in this game, an agent with signal $s$ no longer wants to play the same action, thus eliminating this equilibrium. We also establish that CAD-similarity is equivalent to increasing (decreasing) the maximal (minimal) number of agents playing a certain action across all equilibria.

Our second result~Theorem~\ref{Theorem: CCAD with state} establishes that increasing similarity of the information structure in the contour-CAD order expands the set of symmetric monotone equilibria in the class of common-value coordination games. The converse is also true under an additional regularity condition. We also establish that under this regularity condition, in common-value coordination games, the maximal participation across all equilibria increases and the minimal participation decreases if and only if information becomes more contour-CAD similar.

These two main results speak about how information similarity makes it \textit{possible} for agents to play the same strategy in equilibrium. One may be interested in whether more CAD similarity improves welfare. In general, coordination games have multiple equilibria, and so the effect on welfare depends on which equilibrium is selected. For instance, in the bank run example,  suppose that the social planner wants to minimize the probability of a run. If the planner anticipates the agents will play the adversarial (maximal run) equilibrium, then she prefers lesser homogenization of information.  Conversely, if the planner anticipates the agents will play the advantageous (minimal run) equilibrium, then she prefers greater homogenization of information.

The two characterization results, albeit for a specific class of games, serve as a powerful proof-of-concept for the CAD orders. An interesting line of inquiry is to study the effect of increasing similarity of information in the CAD order in other strategic settings beyond affine coordination games. We consider coordination settings in which an agent's incentive to take an action increases with the aggregate action but not necessarily in an affine fashion, i.e., $h(\cdot)$ is an increasing function but not affine. We develop a stronger notion of CAD, called sCAD, and establish a similar equivalence result. We then consider games of strategic substitutability like congestion games, and show that, analogous to our main results, more CAD-similar information shrinks the set of symmetric equilibria in congestion games. In an auction setting, we show that the revenue from a second-price auction increases when players’ valuations become more similar in the CAD order. In \cite{basak2024protest}, we consider collective action games---games with both strategic complementarities and substitutabilities---and show that CAD yields a neat characterization of when increased similarity of information facilitates or hinders collective action.

\subsection{Related Literature}
Our paper relates to a literature in economics and statistics on comparing interdependence of random variables, \cite[see][for instance]{muller2002comparison,meyer2015beyond,meyer2012increasing,meyer1990interdependence}. \cite{meyer2012increasing} and \cite{meyer2015beyond} illustrate the usefulness of various existing dependence orders, particularly the supermodular order, for economic applications. \cite{epstein1980increasing} provide a behavioral foundation based on preferences for correlation for the supermodular order in the case of bivariate random variables.\footnote{Strictly speaking, they focus on the lower orthant order, which is  equivalent to the supermodular order in two dimensions.} 
However, most  of the existing orders do not 
compare the \emph{conditional} belief distributions that arise in strategic settings with incomplete information.\footnote{It is worth mentioning that the CAD orders satisfy all the desirable properties of stochastic orders of interdependence proposed by \cite{joe1997multivariate}.}

We contribute more broadly to research on games of incomplete information. A large literature that dates back at least to \cite{H71}  studies how exogenous changes in the information environment impacts behavior in games. More recently, \cite{morris2002social}, \cite{angeletos2007efficient}, \cite{bergemann2013robust}, \cite{jensen2018distributional}, and \cite{mekonnen2022bayesian} have studied this in a class of games similar to ours where information is dispersed among agents. 
These studies focus on the value of new information, whereas our setup has no new information. Instead, the information becomes more or less similar. When there is a common state, as is typical in the above-mentioned papers, similarity of information affects the agents' strategic uncertainty but not their fundamental uncertainty (see Remark \ref{remark:commonbelief}). 
Unlike in the global games literature, our focus is not on equilibrium uniqueness, but rather we show how the set of equilibrium changes with similarity of information. \cite{gossner2000comparison}, \cite{cherry2012strategically}, and  \cite{bergemann2016bayes} have also proposed ranking of information structures based on equilibrium set inclusion. 

We consider the effect of changing interdependence of multivariate random variables while keeping fixed the marginal distributions. \cite{clemen1985limits} and \cite{CB24} study how such changes impact the value of information and show that informational diversity may be valuable. \cite{de2023robust} consider an environment with known marginal distributions but unknown joint distribution to obtain the robustly optimal policy in a class of decision problems. \cite{cripps2008common} and \cite{awaya2022common} study the effect of the interdependence of signals on common learning. 
\cite{awaya2022common} show that when agents receive information over multiple periods, then 
more interdependence of information within period 
obstructs common learning. This happens because increasing interdependence within a period does not imply that the signals across multiple periods become more
similar.\footnote{See Online Appendix Section~\ref{Section: awaya krishna example} for a more detailed discussion.}

\section{Notions of Information Similarity}\label{sec:CAD orders}
Since we are interested in studying similarity in settings with incomplete information and multiple agents we define an information structure to be a joint distribution of agents’
private signals or types.\footnote{Henceforth, we will use the terms~``types" and ``signals" interchangeably.} We define two orders of similarity of information structures.  The notions stem from the basic idea that more similar information means that conditional on observing a private signal~$s$, a player believes any other player is more likely to have observed exactly the same signal~$s$. This motivates the name for our class of orders ``Concentration along a Diagonal." 

 Let $\sigset$ be a fixed, finite set of types.  Let $\vec \sigr=(\sigr_1,\sigr_2,\ldots ,\sigr_\nplayers)$ denote the $\sigset^\nplayers-$valued random variable that represents the types of $N$ agents. We assume that $\vec \sigr$ is distributed according to some exchangeable distribution~$\dist$. 
 For any $i$ and $j\neq i$, and any $K \subset \sigset$, we define $\dist_\signal(\cdot) \in \De(\sigset)$ by $$\dist_\signal(K):=Prob(\sigr_j\in K \vert \sigr_i = \signal).$$ In particular, $\dist_\signal(s')$ means $\dist_\signal(\{\sigr_{j} = s'\vert \sigr_i = \signal\})$. Exchangeability of $\dist$ implies that we need not index these conditional distributions with player identities.

\subsection{Concentration along a Diagonal}\label{sec:wCAD}
\bdefn[\textbf{Concentration along a Diagonal}]\label{Definition: Weak CAD} We say that $\dist$ has a ``higher concentration along a diagonal'' than $\distq$, or $\dist$ \textbf{is CAD higher than} $\distq$, denoted by $\dist \cad \distq$, if, 
\begin{enumerate}
    \item $\dist$ and $\distq$ have the same marginal distributions. And, 
    \item For all $\signal \in \sigset$, and any~$i,j$ \begin{enumerate}[(a)]
        \item $\cdist(\signal) \ge \cdistq(\signal)$, and
        \item $\cdist(\signal') \le \cdistq(\signal')$ whenever $\signal' \neq \signal$.
    \end{enumerate}
\end{enumerate}
\edefn 
\noindent
If $\sigr \sim \dist$ and $\sigry \sim \distq$ with $\dist \cad \distq$, we say $\sigr \cad \sigry$, i.e., we use $\sigr \cad \sigry$ and $\dist\cad \distq$ interchangeably. 

In words, when information becomes more similar in the CAD order, any agent~$i$ believes, conditional on being realized type~$s$, that it is more likely that any other agent~$j$ is also of exactly the same type~$s$. It is worth noting that verifying whether two distributions are ranked in this order is not computationally hard. For exchangeable distributions, the complexity is $\mathcal O(\vert \sigset\vert^2)$. The CAD order is equivalent to the orders in \cite{meyer1990interdependence} \cite[see Proposition~1 and ~5 from][]{meyer1990interdependence}.

To gain some more intuition about the CAD order, we provide some equivalent formulations  in the lemma below. It is almost immediate that an increase in the CAD order is equivalent to requiring that, conditional on being of type~$s$, any agent~$i$ assigns higher probability to another agent~$j$ having type in \emph{any subset} of types  that includes type~$s$.
Another way to formulate the order is in terms of the number of other agents who have the same type as a given agent~$i$. An increase in similarity in the CAD order means that, conditional on any agent~$i$ being of type~$s$, the expected number of other agents with type in any subset of types that includes~$s$ is higher. We state this below. The proof is in the appendix. Define, for any agent~$i$, for any $K \subseteq \sigset$, 
\begin{align*}
    \countr(K) := \sum_{j\neq i} \ind_{\sigr_j \in K}
\end{align*}
$\countr(K)$ counts the number of players other than player $i$ with realized type in set $K$. $\countr(K)$ is a $\{0,1,\ldots,\nplayers-1\}$-valued random variable. Let $\ccdfi^{\signal,K}$ denote the CDF of $\countr(K)$ conditional on $\sigr_i = \signal$, and $\sigr$ is distributed according to $\dist$. 

\blemma\label{Lemma: wCAD equivalent with increasing expectation} Let $\sigr$ and $\sigry$ be two $\sigset^\nplayers$-valued, exchangeable random variables with distributions $\dist$ and $\distq$ respectively. Moreover, $\dist$ and $\distq$ have identical marginals. Then, the following are equivalent. 
\begin{enumerate}
    \item $\dist \cad \distq$.
    \item $Prob(\sigr_j \in K \vert \sigr_i = \signal) \ge Prob(\sigry_j \in K \vert \sigry_i = \signal) \quad \forall i,j, \signal, K \ni \signal.$
     \item For all $\signal \in \sigset$ and $K \subseteq \sigset$ such that $\signal \in K$, and for all $i$,  $$\E\left[ \countr(K) \bigg\vert \sigr_i = \signal \right] \ge \E\left[\countr(K)\bigg\vert \sigry_i = \signal\right].$$
     
\end{enumerate}   
\elemma 
\noindent

An increase in similarity, as measured by  CAD, requires that any two agents see \emph{exactly the same} signal with a higher probability. This may be too strong in some contexts, making the orders quite incomplete. Suppose that the signals are ordered. Indeed, two information structures $\dist$ and $\distq$ are not comparable even if the probability of events in which $\dist$ and $\distq$ are \emph{very close} in value is higher under $\dist$, and the probability of events in which $\dist$ and $\distq$ are \emph{very different} in value is lower under $\dist$. 

\subsection{Concentration along a Diagonal: Contour Sets}\label{sec:cCAD}
We introduce a weaker (less incomplete) order of similarity called \emph{Contour-set CAD} that uses the ordering of signals. Intuitively, increased similarity according to the contour-set CAD order, no longer means that there is a higher chance of getting exactly
the same signal. Rather, $\dist$ is more similar than $\distq$ if the signals are close to each other in value with a higher probability under $\dist$ than under $\distq$. 

Formally, assume that the set of signals~$\sigset$ is an ordered set with $\signal_1 \le \signal_2 \le\ldots \le \signal_n$. Given any $\sighat \in \sigset$, we define upper and lower contour sets of $\sighat$. 
$$\sighup := \{ y \in \sigset: y \ge \sighat\}.$$
$$\sighdn := \{ y \in \sigset: y \le \sighat\}.$$
Recall that the notation $\dist_\signal(s')$ means $\dist_\signal(\{\sigr_{j} = s'\vert \sigr_i = \signal\})$.
\bdefn[\textbf{Concentration along a Diagonal: Contour Sets}]\label{Definition: CCAD} We say that $\dist$ has a ``higher concentration on the contour sets along a diagonal'' than $\distq$, or $\dist$ \textbf{is cCAD higher than} $\distq$, denoted by $\dist \ccad \distq$, if, 
\begin{enumerate}
    \item $\dist$ and $\distq$ have the same marginal distributions. And, 
    \item For all $\signal \in \sigset$, \begin{enumerate}[(a)]
        \item $\cdist(\sighup) \ge \distq(\sighup)$ for all $\sighat \le \signal$, and
        \item $\cdist(\sighdn) \ge \distq(\sighdn)$ for all $\sighat \ge \signal$
    \end{enumerate}
\end{enumerate}
\edefn
The contour set CAD order can be stated in terms of intervals. It turns out that $\dist$ has a higher concentration on the contour sets along a diagonal than $\distq$, if and only if conditional on being of type~$s$, an agent~$i$ assigns higher probability to any other agent~$j$ having type in \emph{any interval} of types that includes type~$s$. 
\bdefn[\textbf{Concentration along a Diagonal: Intervals}]\label{Definition: ICAD} We say that $\dist$ has a ``higher concentration on the intervals along a diagonal'' than $\distq$, or $\dist$ \textbf{is iCAD higher than} $\distq$, denoted by $\dist \icad \distq$, if, 
\begin{enumerate}
    \item $\dist$ and $\distq$ have the same marginal distributions. And, 
    \item For all $\signal \in \sigset$, $\cdist(K) \ge \cdistq(K)$ for all the intervals $K \ni \signal$. 
\end{enumerate}
\edefn

\noindent We state the equivalence of $\ccad$ and $\icad$ formally in Proposition \ref{Proposition: relation between orders}. 

\subsection{Comparing the notions}\label{sec:compare}
The next result characterizes the relationship between these orders. The relationship is as a reader might expect. 
\bprop\label{Proposition: relation between orders}  Let $\sigr,\sigry$ be two $\sigset^\nplayers$ valued random variables, where $\sigset$ is an ordered set.
$$  \sigr \cad \sigry \implies \sigr \ccad \sigry \Longleftrightarrow \sigr \icad \sigry.$$
\eprop

\bprf[Proof of Proposition~\ref{Proposition: relation between orders}]
The result that $\sigr \cad \sigry \implies \sigr \ccad \sigry$ follows by invoking Lemma~\ref{Lemma: wCAD equivalent with increasing expectation} and letting $K$ be any upper- or lower-contour sets. Since contour sets are intervals
$\sigr \icad \sigry \implies \sigr \ccad \sigry$ is also trivially true.

Next, we prove $\sigr \ccad \sigry \implies \sigr \icad \sigry$. Suppose that $\sigr \sim \dist$ and $\sigry \sim \distq$ with $\dist \ccad \distq$,  but $\dist \nicad \distq$. Therefore, $\exists$ $\signal \in \sigset$ and an interval $K \ni \signal$ such that $\cdist(K) < \cdistq(K)$. Let $\slb, \sub$ be the min and max elements of $K$ respectively. We have, 
\begin{align*}
    1 =& \cdist(\{\sigy: \sigy < \slb\}) + \cdist(K) + \cdist(\{\sigy: \sigy > \sub\}) \\
    =& \cdistq(\{\sigy: \sigy < \slb\}) + \cdistq(K) + \cdistq(\{\sigy: \sigy > \sub\})
\end{align*}
Therefore, at least one of the following holds: 
\begin{enumerate}[$(i)$]
    \item $\cdist(\{\sigy: \sigy < \slb\}) > \cdistq(\{\sigy: \sigy < \slb\})$.
    \item $\cdist(\{\sigy: \sigy > \sub\}) > \cdistq(\{\sigy: \sigy > \sub\})$
\end{enumerate}
If $(i)$ holds, then $\cdist(\slb^\uparrow) < \cdistq(\slb^\uparrow)$. If $(ii)$ holds, then $\cdist(\sub^\downarrow) < \cdistq(\sub^\downarrow)$. In either case, this contradicts $\dist \ccad \distq$.

To see that $\ccad$ does not imply $\cad$, consider Figure~\ref{fig:example_ccad_not_wcad}. We start with a random variable $\sigr$ with a support $\{1,2,3,4\}\times \{1,2,3,4\}$. We construct a random variable $\sigry$ by increasing the mass on realizations $(1,1), (2,3), (3,2),$ and $(4,4)$ by $\a > 0$ each, and reducing the mass on realizations $(1,3), (2,4), (3,1),$ and$(4,2)$ by $\a$. The marginal distributions of $\sigr$ and $\sigry$ coincide. Moreover, $\sigr \ccad \sigry$ but $\sigr\ncad \sigry$. 
\begin{figure}[h!]
    \centering
    \begin{tikzpicture}[scale=1.2]
        \foreach \x in {1,2,3,4}
        \foreach \y in {1,2,3,4}
        {
        \fill[black] (\x,\y) circle[radius=2pt];
        }
        \node at (2.5,-0.25) {\small $\sigr_1$};
        \node at (-0.25,2.5) {\small $\sigr_2$};
        
        \foreach \x in {1,2,3,4}
        {
        \node at (\x,0.5) {\small \x};
        }

        \foreach \y in {1,2,3,4}
        {
        \node at (0.5,\y) {\small \y};
        }
        
        \fill[red] (1,3) circle[radius=2pt];
        \node at (1.3,3) {\small $-\alpha$};

        \fill[red] (2,4) circle[radius=2pt];
        \node at (2.3,4) {\small $-\alpha$};

        \fill[red] (3,1) circle[radius=2pt];
        \node at (3.3,1) {\small $-\alpha$};

        \fill[red] (4,2) circle[radius=2pt];
        \node at (4.3,2) {\small $-\alpha$};
        
        \fill[blue] (1,1) circle[radius=2pt];
        \node at (1.3,1) {\small $+\alpha$};

        \fill[blue] (2,3) circle[radius=2pt];
        \node at (2.3,3) {\small $+\alpha$};

        \fill[blue] (3,2) circle[radius=2pt];
        \node at (3.3,2) {\small $+\alpha$};

        \fill[blue] (4,4) circle[radius=2pt];
        \node at (4.3,4) {\small $+\alpha$};

    \end{tikzpicture}
    \caption{$\sigry \ccad \sigr$ but $\sigry \protect\ncad \sigr$.}
    \label{fig:example_ccad_not_wcad}
\end{figure}
\eprf 
In the appendix \ref{subsection: relation to other orders}, we discuss the relationship between the CAD orders and other well-known orders such as supermodular order or concordance order. For $\nplayers  = 2$, our orders imply the supermodular and positive quadrant dependence order, and for $\nplayers  > 2$ the notions are not nested.

\section{CAD Similarity and Coordination Equivalence}\label{sec: main results}

The example in the introduction highlighted, somewhat surprisingly, that an increase in correlation (or in the supermodular order or positive quadrant dependence) does not increase the incentives for an agent to play the same strategy as others in simple coordination games. 
In this section, we show that in a canonical class of binary-action coordination games, increasing information similarity across agents in the sense of increasing CAD is \emph{equivalent} to expanding the set of equilibria.\footnote{\cite{bergemann2016bayes} also use equilibrium set inclusion to define a partial order of informativeness in games of incomplete information. The authors propose that more information shrinks the set of Bayes Correlated Equilibria because if agents have more information, then a smaller set of outcomes is incentive-compatible.}

The canonical class of binary-action coordination games we consider has been widely used in macroeconomics \cite[e.g.,][for a currency attack]{morris1998unique}, financial economics \cite[e.g.,][for bank-runs and financial fragility]{goldstein2005demand, vives2014strategic}, industrial organization \cite[e.g.,][for a platform network externality]{farrell2007coordination}, political economy \cite[e.g.,][for protests]{shadmehr2011collective}, development economics \cite[e.g.,][for under-investment problem]{sakovics2012matters}, international economy \cite[e.g., ][for tax havens]{konrad2016coordination}, and organizational economics \cite[e.g.,][for reporting sexual misconduct problem]{cheng2022reporting}. We consider two kinds of coordination games: private value and common value. The main difference is that the first kind does not have a common state whereas the second kind does. \cite{morris2016common} first highlighted this distinction.

\subsection{Binary-action private-value coordination games}\label{subsec:Game definition}
Consider a setting in which  $\nplayers$ players, indexed by $i$ or $j\neq i$, simultaneously and independently choose whether to take an action ($a_i=1$) or not ($a_i=0$). Each player $i$'s payoff depends on the aggregate action by others $$\ractoth=\sum_{j\neq i} a_j.$$
Using the same notation as in Section~\ref{sec:CAD orders}, we let $\sigset$ be a fixed, finite set from which player types are drawn. Let $\vec \sigrs=(\sigrs_1,\sigrs_2,\ldots ,\sigrs_\nplayers)$ be a $\sigset^\nplayers-$valued random variable distributed according to an exchangeable distribution~$\dist$. Suppose that $\vec s=(s_1,s_2,\ldots s_\nplayers)\in \sigset^\nplayers$ is the realized type profile. Player~$i$ knows her own type $\signal_i$ but does not know a different player~$j$'s type $\signal_j$. As before, for any two agents~$i$ and $j\neq i$, and any $K \subset \sigset$ define $\dist_\signal(\cdot) \in \De(\sigset)$ by 
$$\dist_\signal(K):=Prob(\sigrs_j\in K \vert \sigrs_i = \signal).$$ In particular, $\dist_\signal(s')$ means $\dist_\signal(\{\sigrs_{j} = s'\})$. 

A player $i$ with type $\sigrs_i=s_i$ gets a payoff of $u(a_i=1,\ractoth,\signal_i)$ if she acts and $u(a_i=0,\ractoth,\signal_i)$ if she does not act. The net payoff from taking action~$a=1$ is
\begin{align}
    \payoff(\ractoth,\signal_i):=u(a_i=1,\ractoth,\signal_i)-u(a_i=0,\ractoth,\signal_i)=\a(\signal_i)+\b(\signal_i)h(\ractoth)\label{Equation: payoff difference games}
\end{align}
where $h(\cdot)$ is increasing and $\b(\cdot)$ is non-negative. Since $\beta(\cdot)\ge 0$ for all $s\in \sigset$, the game~$\G$ exhibits strategic complementarity. Henceforth, we refer to such symmetric binary-action games of strategic complementarity simply as \textit{private-value coordination games}. 

\begin{definition}[\textbf{Affine Private-value Coordination games}]
A game~$\G$ a called an \textbf{affine private-value coordination game} when the payoff difference can be described by Equation \eqref{Equation: payoff difference games} and $h(.)$ is affine. 
\end{definition}
\bass\label{Assumption: signals are payoff relevant}\textbf{Payoff-relevant signals:} For every $i \in \players$ and $\signal_i,\signal'_i\in \sigset$ such that $\signal_i\neq \signal'_i$,  there is $\actoth$ such that $\payoff(\actoth, \signal_i) \neq \payoff(\actoth,\signal'_i)$.\eass 
This assumption simply precludes the possibility of artificially creating  finer signal spaces from the original signal space 
to render the notion of changing similarity vacuous.\footnote{For instance, in bank run example, a new signal space can be created by ``splitting'' each signal into two signals by adding a payoff irrelevant feature, such as color. Then, a CAD increase in the original signal space may not translate to a CAD increase in the new signal space.
}

Technology adoption by firms is an example of such a game. To fix ideas, consider $N$ firms choosing whether or not to invest in a new technology. A firm's type~$s_i$ captures the value of the new technology to her. Each firm's payoff from investment depends also on the aggregate investment. The non-negative $\beta(\cdot)$ captures the network externality \cite[see][]{katz1985network, farrell1986installed}, where each firm benefits when more firms invest in the new technology.

\subsection*{\small{Equilibrium}} 
We restrict attention to symmetric Bayes Nash Equilibrium in pure strategies (henceforth, equilibrium). Under distribution $\dist$, a strategy profile where each player plays $\strategy:\sigset\to\{0,1\}$ constitutes an equilibrium if
\begin{align}
    \strategy(\signal) = 1 \implies \E[\payoff(\ractoth, \signal) \vert \signal,\strategy;\dist ] \ge 0 \tag{IC:P}\label{Equation: IC to play a=1}\\
    \strategy(\signal) = 0 \implies \E[ \payoff(\ractoth, \signal) \vert \signal,\strategy;\dist ] \le 0 \tag{IC:NP}\label{Equation: IC to play a=0}
\end{align}
Let $\mathcal{E}(\G, \dist)$ be the set of equilibrium in the game $\G$ under distribution $\dist$. The restriction to symmetric pure strategies implies that any strategy~$\sigma$ simply partitions the signal space into participation and non-participation sets---sets of signals after which agents take action  and do not take action respectively. 
\begin{equation*}
    P(\strategy): = \{\signal\in\sigset: \strategy(\signal) = 1\}, \ 
    NP(\strategy) := \sigset \backslash P(\strategy).
\end{equation*} 
Define the maximal and minimal participation across all equilibria as follows
$$\maxP(\G,\dist) := \max_{\strategy \in \eq(\G,\dist)} \dist(P(\strategy)) \quad \text{ and } \quad \minP(\G,\dist):= \min_{\strategy \in \eq(\G,\dist)} \dist(P(\strategy)).$$

\subsection{CAD Similarity and Coordination Equivalence}
Our first result, Theorem~\ref{Theorem: wCAD equivalence}  shows that increasing similarity of information in the CAD order is equivalent to 
aiding agents playing the same strategy in the class of coordination games~$\G$ when $h(\cdot)$ is affine. 
\bthm\label{Theorem: wCAD equivalence} Let $\dist$ and $\distq$ be two joint distributions over $\sigset^N$. The following are equivalent. 

\begin{enumerate}
    \item $\dist \cad \distq$.  
    \item $\eq(\G, \dist) \supseteq \eq(\G,\distq)$ for all affine private-value coordination games~$\G$.
    \item $\maxP(\G, \dist) \ge \maxP(\G,\distq)$ for all affine private-value coordination games~$\G$.
    \item $\minP(\G, \dist) \le \minP(\G,\distq)$ for all affine private-value coordination games~$\G$.
\end{enumerate}
\ethm

The formal proof is in the appendix. Below, we discuss the main argument. We first show that an equilibrium $\sigma$ remains an equilibrium when information become more CAD-similar (from $\distq$ to $\dist$). Given equilibrium $\sigma$, let $P(\strategy)$ and $NP(\strategy)$ be the set of signals under which an agent participate and do not participate, respectively. Consider any agent $i$ with signal $s$. When information becomes more CAD similar, her incentive to participate changes as follows
\begin{align*}
\E[ \payoff(\ractoth, \signal) \vert \signal,\strategy;\dist ] - \E[ \payoff(\ractoth, \signal) \vert \signal,\strategy;\distq ] \\
=\b(s) \left[ \E[ h(\ractoth) \vert \signal,\strategy;\dist ] - \E[h(\ractoth) \vert \signal,\strategy;\distq ] \right].
\end{align*}
Recall that $\beta(s)\geq 0$, $h(\cdot)$ is affine and increasing, and $\ractoth$ is the number of agents who receives signals in $P(\sigma)$. By Lemma~\ref{Lemma: wCAD equivalent with increasing expectation}, an agent with $\signal \in P(\strategy)$ expects more agents to see a signal in $P(\strategy)$ that contains her own signal. Therefore, she has a higher incentive to participate than before (the above expression is non-negative). Conversely, an agent with $\signal \in NP(\strategy)$ expects fewer agents to see a signal in $P(\strategy)$ that does not contain her own signal. Therefore, she has a lower incentive to participate than before (the above expression is non-positive). Therefore, if $\sigma$ is an equilibrium under $\distq$ then it remains an equilibrium under more CAD similar $\dist$. This establishes $(1) \implies (2)$. If the set of equilibrium expands, then the maximal participation increases and the minimal participation decreases, which establishes  $(2) \implies (3), (4)$. 

Finally, we establish that $(3) \implies (1)$. A similar argument holds for $(4)\implies (1)$. Suppose that (3) holds, but $\dist \ncad \distq$. Then, $\exists \signal^*\in \sigset$ and $K \ni \signal^*$ such that, 
$$\dist(\sigrs_j \in K \vert \sigrs_i = \signal^*) 
< \distq(\sigrs_j \in K \vert \sigrs_i = \signal^*).$$
Consider a strategy profile~$\strategy$ with participation set $P(\strategy)=K$. 
We can construct a game $\G$ in which  $\strategy$ is the maximal equilibrium under information structure $\distq$. To do this, we choose $\alpha(s)$ sufficiently negative for $s\notin K$, so that participation is a dominated strategy for $s\notin K$. Also, by construction, we make an agent with signal $\signal^*$ indifferent between participating and not participating under $\distq$. Now, consider changing the information structure to $\dist$. It follows from the above inequality that the agent with signal $\signal^*$ now strictly prefers not participating. This means $\sigma$ is no longer an equilibrium under $\dist$. Recall that it is a dominated strategy for $s\notin K$ to participate. Therefore, under $\dist$, there is no  other equilibrium with greater participation. Since given $\strategy$, aggregate participation depends only on the marginal distribution, we must have 
$\maxP(\G, \dist)<\maxP (\G,\distq)$, which contradicts (3).

\subsection{Common-value Affine Coordination Games}
Theorems~\ref{Theorem: wCAD equivalence} establishes an equivalence between increases in similarity in CAD orders and the expansion of the set of \emph{all} pure strategy equilibria. But in many economic applications, it is reasonable to consider some structure in the equilibrium strategies: For instance, 
in this section, we study a class of coordination games commonly studied in the global games literature, in which it is typical to focus on equilibria in cutoff strategies--- a player acts if and only if her type is high enough.\footnote{In fact, under certain regularity conditions, the maximal and minimal equilibria in these settings are in cutoff strategies   \cite[see][for a survey]{Morris_Shin_2003}.} We establish an analogous equivalence between increasing similarity in the contour-CAD order and expanding the set of cutoff equilibria.

As in Section~\ref{subsec:Game definition}, we consider a setting in which $N$ players simultaneously and independently choose whether to act $(a_i=1)$ or not $(a_i=0)$. But now, each player~$i$'s payoff depends on the aggregate action by others and an unknown but common underlying state~$\stater\in \states$, where $\states$ is a finite ordered set. The state is drawn from a common prior $\posterior_0\in \Delta (\states)$. We interpret an agent’s type as the private signal she receives about the underlying state. 
Let $\sigset$ be a fixed, finite ordered set of signals, and the profile of signals $\vec \sigr$ be a $\sigset^\nplayers-$valued random variable. Conditional on $\stater=\state$, $\vec \sigr$ is distributed according to $\dist^\state$. We assume that $\dist^\state$ is exchangeable for all $\state \in \states$. Therefore, $\dist$ is also exchangeable. 

A player $i$ receives a signal $\sigr_i$. The marginal distribution of $\sigr_i$ conditional on state $\stater=\state$ is denoted by $\marg \dist^\state (.)=\sum_{\signal_{-i}\in \sigset^{\nplayers-1}} \dist^\state(.,\signal_{-i})$.  Every signal realization $\sigr_i=\signal$ generates a posterior distribution over $\states$  obtained using Bayes rule. That is, for any $T \subseteq \states$, 
\begin{align*}
    \posterior(\signal)(T)  =  \frac{\sum_{\state \in T} \posterior_0(\state) \marg \dist^\state(\signal)}{\sum_{\state \in \states} \posterior_0(\state) \marg \dist^\state(\signal)}
\end{align*}
We assume that $\sum_{\state \in \states} \posterior_0(\state) \marg \dist^\state(\signal) > 0$ for all $\signal\in \sigset$.  
Conditional on $\stater=\state$, as before, for any $i$ and $j\neq i$, and any $K \subset \sigset$, we define $\dist^\state_\signal(\cdot) \in \De(\sigset)$ by $$\dist^\state_\signal(K):=Prob(\sigr_j\in K \vert \sigr_i = \signal, \stater=\state).$$ In particular, $\dist^\state_\signal(s')$ means $\dist^\state_\signal(\{\sigr_{j} = s'\})$. Exchangeability of $\dist^\state$ implies that we need not index these conditional distributions with player identities.

\noindent
The net payoff of each player takes the following form.
\begin{align}
\label{Equation: payoff difference common value games}
    \payoff(\actoth,\state) = \a(\state) + \b(\state) h(\actoth),
\end{align}
for some affine, increasing $h(\cdot)$ and $\b (\cdot) \ge 0$. We make an assumption similar to Assumption~\ref{Assumption: signals are payoff relevant}. 
\noindent
\bass\label{Assumption: payoff relevance with common state}[\textbf{Payoff-relevant state}] For every $i\in\players$ and $\state \neq\state'$, there is $\actoth$ such that $\payoff(\actoth,\state) \neq \payoff(\actoth,\state')$. 
\eass

Notice that unlike in the games described in Section~\ref{subsec:Game definition},  players' payoffs do not depend on their idiosyncratic type. Rather, this is a ``common-value" environment, where payoffs depend on a common unknown state. 

\begin{definition}[\textbf{Common-value Affine Coordination games}]
A game~$\widetilde \G$ a called a \textbf{common-value affine  coordination game} when the payoff difference can be described by Equation \eqref{Equation: payoff difference common value games}, with  increasing and affine $h(\cdot)$ and non-negative $\b(\cdot)$. 
\end{definition}

There are many applications with this payoff structure. A famous example is a currency attack game \`a la \citet{morris1998unique}. Speculators simultaneously decide whether or not to attack a fixed exchange rate regime by selling the currency short. The payoff from not shorting is $u(a_i=1,\actoth,\state)=0$ and the payoff from shorting is $u(a_i=0,\actoth,\state)=b (\theta) (1-p(\state,\actoth)) - cp(\state,\actoth),$
where $p(\state,\actoth)$ is the probability that the currency will not be devalued, $b(\theta)>0$ is the benefit when the currency is devalued, and $c$ is the cost when it is not devalued. This gives $\payoff(\actoth, \state)=(b(\theta)+c)p(\state,\actoth)-b(\state)$. Let $p(\state,\actoth)=p_\theta \state + p_A h(\actoth)$. The currency is less likely to be devalued when the currency is stronger ($p_\state>0$) and fewer speculators short ($h(\actoth)$ affine and increasing and $p_A>0$).  Then, 
$$\payoff(\actoth, \state)=\underbrace{p_\theta(b(\theta)+c) \state -b(\theta)}_{\a(\state)}+\underbrace{p_A(b(\state)+c)}_{\b(\state)>0} \cdot h(\actoth).$$
Similar payoff structures arise in other games with strategic complementarity like debt rollover problems (e.g., \cite{morris2004coordination}) or bank runs (e.g.,\cite{goldstein2005demand}) when the probability of the borrower or the bank surviving is $p(\state,\actoth)=p_\theta \state + p_A h(\actoth)$.

\subsection*{\small{Equilibrium}} 
We restrict attention to symmetric Bayes Nash Equilibrium in pure strategies (henceforth, equilibrium). Under distribution $\dist$, a strategy profile where each player plays $\strategy:\sigset\to\{0,1\}$ constitutes an equilibrium if
\begin{align*}
    \strategy(\signal) = 1 \implies \E[\payoff(\ractoth,,\stater) \vert \signal,\strategy;\dist ] \ge 0 \\
    \strategy(\signal) = 0 \implies \E[ \payoff(\ractoth, \stater) \vert \signal,\strategy;\dist ] \le 0.
\end{align*}

\subsection*{\small{Cutoff Strategies}}
As is customary in many applications, we focus on equilibria in cut-off strategies. 
\begin{definition}
We say a strategy~$\strategy$ is a cutoff equilibrium if there is a cutoff~$\tilde s$ such that $P(\strategy)=\{s\in\sigset: s\geq \tilde s\}$ and $NP(\strategy)=\sigset\setminus P(\strategy)$. Let $\eqc(\widetilde \Gamma,\dist)$ denote the set of cutoff equilibria given any common-value affine coordination game $\widetilde \G$ and information structure $\dist$. 
\end{definition}
Recall that, $\dist(\cdot) = \sum_{\state} \posterior_0(\state) \dist^\state(\cdot)$. Define \begin{align*}
    \eqmaxp(\G,\dist):= \max_{\strategy \in \eqc(\G,\dist)}\dist(P(\strategy))
\end{align*} to be the maximal equilibrium participation. Similarly, define
\begin{align*}
    \eqminp(\G,\dist):= \min_{\strategy \in \eqc(\G,\dist)} \dist(P(\strategy)) = 1- \max_{\strategy \in \eqc(\G,\dist)} \dist(NP(\strategy))
\end{align*} to be the minimal equilibrium participation. If $\eqc(\G,\dist) = \emptyset$, then we set $\eqmaxp(\G,\dist) = 0$ and $\eqminp(\G,\dist) = 1$.

\subsection{Contour-CAD and Cutoff Equilibria}\label{sec:contourcad}

We show below that in this class of games, increasing similarity in the contour CAD order implies an increase in the set of cut-off equilibria, and the converse is also true under an additional regularity condition.
\bdefn[\textbf{Affine independence}]\label{Definition: affine independent info} An information structure $\dist$ is \textbf{affinely independent} if the set of posteriors~$\mu(\signal)$ it generates is affinely independent. That is, 
\begin{align*}
    \sum_{\signal \in \sigset} \lm_\signal \posterior(\signal) = 0 \text { and } \sum_{\signal\in\sigset} \lm_\signal =0 \implies \lm_\signal = 0 \forall \signal \in \sigset.
\end{align*}
\edefn

\bthm\label{Theorem: CCAD with state} Let $\dist=(\dist^\state)_{\state\in \states}$ and $\distq=(\distq^\state)_{\state\in \states}$ be two distributions with identical marginal distributions for any $\state\in \states$. 

\begin{enumerate}
    \item If $\dist^\state \ccad \distq^\state$  for all 
    $\state$ with $\dist^\state \neq \distq^\state$ then 
    $\eqc(\widetilde \Gamma,\dist) \supseteq \eqc(\widetilde \Gamma,\distq)$ for all common-value affine coordination games $\widetilde \Gamma$. 
    \item Suppose $\distq$ (and hence $\dist$) is affinely independent. Let $T \subseteq \states$ be such that $\dist^\state \neq \distq^\state$ iff $\state \in T$. Then, the following are equivalent. 
    \begin{enumerate}
        \item $\dist^\state \ccad \distq^\state$ $\forall \state \in T$.
        \item $\eqc(\widetilde \Gamma,\dist)\supseteq \eqc(\widetilde \Gamma,\distq)$ for all common-value affine coordination games $\widetilde \Gamma$.
        \item $\eqmaxp(\widetilde \Gamma,\dist)\ge \eqmaxp(\widetilde \Gamma,\distq)$ and $\eqminp(\widetilde \Gamma,\dist)\le \eqminp(\widetilde \Gamma,\distq)$ for all common-value affine coordination games $\widetilde \Gamma$.
    \end{enumerate}    
    
\end{enumerate}
\ethm

\medskip
The proof of Theorem~\ref{Theorem: CCAD with state} is in the appendix. Below we explain why we impose the condition of affine independence of the posteriors $\posterior(\cdot)$, that we did not need in proving Theorems~\ref{Theorem: wCAD equivalence}. For part $(2)$, we want to show that if $\dist^\state \nccad \distq^\state$ for some $\state \in \states$, then we can construct a common value coordination game $\widetilde\Gamma$ such that $\eqc(\widetilde\Gamma;\distq) \nsubseteq \eqc(\widetilde\Gamma;\dist)$. If $\dist^\state \nccad \distq^\state$, then we have a signal $\signal$ and a contour set $\sighup$ such that $\signal \in \sighup$ and $\cdist^\state(\sighup) < \distq^\state(\sighup)$.\footnote{
Alternatively, a contour set $\sighdn$ with an analogous implication that we choose to omit in this discussion for the sake of brevity.} To prove the analogous parts of Theorems~\ref{Theorem: wCAD equivalence}, we used the dependence of the payoff on the private signal $\beta(\cdot)$ to construct a game such that the desired equilibrium set inclusion fails. Now, $\beta(\cdot)$ depends only on the common $\state$, and so we cannot use the same approach.  

To see how affine independence is useful, consider three affinely independent points $\{y_1,y_2,y_3\}$. Partition this set into two disjoint sets, $K_1, K_2$. We can draw a supporting hyperplane passing through any point from $\{y_1,y_2,y_3\}$ that separates $K_1$ and $K_2$ (see Figure~\ref{fig:supporting_hyperplanes}). In the proof, we consider affinely independent posterior beliefs. A general separation lemma then allows us to construct a hyperplane through $\posterior(\signal)$ to separate the set of posteriors below $\sighat$, from the set of posteriors above $\sighat$ (i.e., in $\sighup$ other than $\signal$). Formally, we  can define  $\a: \states \to \real$ such that, \begin{align*}
    \max_{\signal' < \sighat} \E_{\posterior(\signal')}[\a(\stater)] < \E_{\posterior(\signal)} [\a(\stater)] < \min_{\signal' \in \sighup, \signal' \neq \signal}\E_{\posterior(\signal')}[\a(\stater)].
\end{align*}
Using $\a(\cdot)$ defined above, and $\beta(\state') = \ind_{\state' =\state}$ we can construct a game $\widetilde\Gamma$ such that $\strategy(\signal') = \ind_{\signal' \ge \sighat}$ is an equilibrium in $\G$ under $\distq$ but not under $\dist$ leading to a desired contradiction. This argument also suggests that affine independence may not be the weakest requirement on the set of posteriors. 

\begin{figure}[h!]
\centering

\begin{minipage}[t]{0.3\textwidth} 
\centering
\begin{tikzpicture}[scale=1.5] 
    \fill (0,0) circle (2pt) node[below left] {$y_1$};
    \fill (1,0) circle (2pt) node[below] {$y_2$};
    \fill (0,1) circle (2pt) node[above right] {$y_3$};
    \draw[thick] (-0.2,-0.2) -- (1.2,1.2);
\end{tikzpicture}
\caption*{Supporting hyperplane through $y_1$ separating $\{y_1,y_2\}$ from $\{y_3\}$.}
\end{minipage}
\hfill
\begin{minipage}[t]{0.3\textwidth} 
\centering
\begin{tikzpicture}[scale=1.5] 
    \fill (0,0) circle (2pt) node[below left] {$y_1$};
    \fill (1,0) circle (2pt) node[below] {$y_2$};
    \fill (0,1) circle (2pt) node[above right] {$y_3$};
    \draw[thick] (-0.4,0.7) -- (1.4,-0.2);
\end{tikzpicture}
\caption*{Supporting hyperplane through $y_2$ separating $\{y_2,y_3\}$ from $\{y_1\}$.}
\end{minipage}
\hfill
\begin{minipage}[t]{0.3\textwidth} 
\centering
\begin{tikzpicture}[scale=1.5] 
    \fill (0,0) circle (2pt) node[below left] {$s_1$};
    \fill (1,0) circle (2pt) node[below] {$s_2$};
    \fill (0,1) circle (2pt) node[above right] {$s_3$};
    \draw[thick] (-0.2,1.4) -- (0.7,-0.4);
\end{tikzpicture}
\caption*{Supporting hyperplane through $y_3$ separating $\{y_1,y_3\}$ from $\{y_2\}$.}
\end{minipage}

\caption{Illustration of supporting hyperplanes separating sets of points.}
\label{fig:supporting_hyperplanes}
\end{figure}

A commonly studied special case of this setting is the separable environment, in which $\beta$ is a state-independent constant. In the appendix \ref{subsection: separable}, we show that Theorem~\ref{Theorem: CCAD with state} continues to hold in the separable environment. However, the difference is that we do not need the signals to be more similar in each state (i.e., $\dist^\state\ccad \distq^\state$), but only in expectation.

\begin{remark}
\label{remark:commonbelief}
A large global games literature following \cite{morris2002social} has explored the impact of a new public signal on coordination. The main difference is that unlike a new public signal, CAD similarity does not change an agent's belief about the underlying state since the marginal distribution of signals remain the same. However, a higher CAD similarity expands the set of signals where agents have common p-belief that $\state\in E$ for any $E\subset \Theta$. To see this, note that the set of types who p-believes $\state\in E$, i.e., $P(\state\in E|s)>p$, remains the same. Let’s call this $S^1$. However, after information becomes more CAD-similar, any type in $S^1$ believes that the other player is more likely to p-believe $\state\in E$. So, the set of types in $S^1$ who p-believe that the other player p-believes $\state\in E$ expands. Let’s call this $S^2$. Similarly, $S^3$ expands and so on. Since the common p-belief that $\state\in E$ is the intersection of all these sets, it also expands.   
\end{remark}

\section{Applications}

Theorems~\ref{Theorem: wCAD equivalence} and ~\ref{Theorem: CCAD with state} make the case for CAD orders by showing the equivalence of CAD-similarity and strategic similarity in canonical coordination games. This equivalence makes the CAD-orders particularly well-suited to compare information similarity in Bayesian games. In this section, we show how CAD orders can be used to derive economically meaningful predictions even in settings much beyond the games considered so far.

\subsection{Does more similar information improve welfare?}

We now revisit the bank run example introduced earlier. Our theorems demonstrate that greater similarity in information encourages agents to adopt the same strategy, which in turn increases the maximal run and decreases the minimal run across all equilibria. This raises the question of regulatory intervention: should a regulator increase or decrease information similarity to prevent bank runs? The appropriate intervention depends on equilibrium selection. 

For instance, if the regulator anticipates the agents will play the adversarial (maximal run) equilibrium then she prefers lesser homogenization of information. 
To see this, consider the example in the introduction. There is an equilibrium in which a player only stays when she sees signal $\frac{3}{2}$ if 
\begin{equation*}
\sum_\theta P \left(\theta\middle |s_i=\frac{1}{2}\right) \underbrace{\left(\theta+P\left(s_j=\frac{3}{2}\middle|s_i=\frac{1}{2},\theta\right)-1\right)}_{\text{Expected payoff from staying in state~$\theta$}}\leq 0.
\end{equation*}
The above inequality implies that a player who sees signal $\signal_i=\frac{1}{2}$ prefers running if the other player only stays when she sees $\signal_j=\frac{3}{2}$. Consider an information structure $\distq$ in which the above inequality is not satisfied. However, suppose information becomes more cCAD similar in state $\state=\frac{1}{2}$ and remains unchanged in other states. Then, a player who receives signal $\signal_i=\frac{1}{2}$ assigns a lower probability that the other player sees a signal $\signal_j=\frac{3}{2}$ if the state is $\state=\frac{1}{2}$, which decreases her incentive to stay. A sufficient decrease can satisfy the above inequality, thus creating an equilibrium in which a player only stays when she sees signal $\frac{3}{2}$. Thus, expected number of players who will run on the bank under the adversarial equilibrium has increased after the information becomes more cCAD similar. 
Thus, if the regulator anticipates that agent will play the adversarial equilibrium, then in the above example, the expected number of players who run on the bank increases after the information becomes more cCAD similar, making lesser homogenization of information desirable.

Conversely, if the social planner anticipates the agents will play the advantageous (minimal run) equilibrium then she prefers greater homogenization of information. To see this, consider the equilibrium in which a player only runs when she sees signal $-\frac{1}{2}$. This is an equilibrium if 
\begin{equation*}
\sum_\theta P \left(\theta\middle |s_i=\frac{1}{2}\right) \underbrace{\left(\theta+P\left(s_j\in\{\frac{1}{2},\frac{3}{2}\}\middle|s_i=\frac{1}{2},\theta\right)-1\right)}_{\text{Expected payoff from staying in state~$\theta$}}\geq 0.
\end{equation*}
The above inequality implies that a player who sees signal $\signal_i=\frac{1}{2}$ prefers staying if the other player stays when she sees $\signal_j\in\{\frac{1}{2},\frac{3}{2}\}$. Consider an information structure $\distq$ in which the above inequality is not satisfied. However, suppose information becomes more cCAD similar in state $\state=\frac{1}{2}$ and remains unchanged in other states. Then, a player who receives signal $\signal_i=\frac{1}{2}$ assigns a higher probability that the other player sees a signal $\signal_j\in \{\frac{1}{2},\frac{3}{2}\}$ if the state is $\state=\frac{1}{2}$, which increases her incentive to stay. A sufficient increase can satisfy the above inequality, thus creating an equilibrium in which a player only runs when she sees signal $-\frac{1}{2}$. Thus, if the regulator anticipates that agent will play the advantageous equilibrium, then in the above example, the expected number of players who run on the bank decreases after the information becomes more cCAD similar, making greater homogenization of information desirable.

\subsection{Beyond affine coordination games}

Next, we move from affine and increasing $h(\cdot)$ to simply increasing $h(\cdot)$. We define a stronger notion of similarity that we call \emph{Strong Concentration Along the Diagonal (sCAD)}.  We say information structure~$\dist$ is more similar than $\distq$, or is greater in the strong CAD order, if any agent $i$ believes, conditional on her realized signal $s$, that the number of other agents with the a similar signal (signal in any set $K$ that contains $\signal$) is higher in the sense of \emph{first-order stochastic dominance} under $\dist$ than $\distq$. Under CAD, this ranking was only in terms of expectation.  We show, analogous to our first result that increasing similarity of information in the sCAD order is equivalent to expanding the set of equilibria in the class of private-value coordination games for any increasing $h(\cdot)$.

\bthm\label{Theorem: sCAD equivalence} For any two joint distributions
$\dist$ and $\distq$, $\dist \scad \distq$ if and only if $\eq(\G, \dist) \supseteq \eq(\G,\distq)$ for all private-value coordination games~$\G$.

\ethm

\subsection{Similarity and Congestion}
Consider a binary-action game of strategic substitutability in which $N$ players simultaneously choose between actions~$0$ or $1$, and the payoff difference 
\begin{align}
    \payoff(\ractoth,\signal_i):=u(a_i=1,\ractoth,\signal_i)-u(a_i=0,\ractoth,\signal_i)=\a(\signal_i)+\b(\signal_i)h(\ractoth)\label{Equation: payoff difference congestion games}
\end{align}
decreases in the aggregate action, that is, $\beta(.)<0$. Since $\beta(\cdot)< 0$ for all $s\in \sigset$, the game~$\G$ exhibits strategic substitutability. Henceforth, we refer to such symmetric binary-action games of strategic substitutability simply as \textit{congestion games}.  In contrast to coordination games, a player in a congestion game has a \textit{smaller} incentive to play action $1$ if more of the other players play action $1$.
\begin{definition}[\textbf{(Affine) Congestion Games}]
A game~$\hat \G$ a called a \textbf{congestion game} when the payoff difference can be described by Equation \eqref{Equation: payoff difference congestion games}, with  increasing $h(\cdot)$ and negative $\b(\cdot)$. The congestion game~$\hat \G$ is said to be \textbf{affine} if $h(\cdot)$ is affine.
\end{definition}
 A classic example is an entry game in a market with capacity constraints. A firm suffers when more firms enter making the market more congested \cite[see][]{duffy2005learning}. The conflict game of \cite{baliga2012strategy} is another interesting example in which a country wants to be aggressive whenever its opponent is peaceful and vice versa. A public good contribution game in which an agent's contribution to the public good matters less for successful public good provision when more agents contribute also exhibits strategic substitutability \cite[see][]{harrison2021global}.
 
 We can establish equivalence results that are analogous to our main results: More CAD-similar information shrinks the set of symmetric pure strategy BNE in congestion games. 
 
\begin{proposition}
    \label{proposition:congestion}
Let $\dist$ and $\distq$ be two joint distributions over $\sigset^N$. 
\begin{enumerate}
    \item $\dist \cad \distq$ if and only if $\eq(\G, \dist) \subseteq \eq(\G,\distq)$ for all affine congestion games $\hat \G$. 
    \item $\dist \scad \distq$ if and only if $\eq(\G, \dist) \subseteq \eq(\G,\distq)$ for all congestion games $\hat \G$.
\end{enumerate}
\end{proposition}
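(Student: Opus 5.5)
The plan is to mirror the proof of Theorem~\ref{Theorem: wCAD equivalence} and Theorem~\ref{Theorem: sCAD equivalence}, with the sign of $\b(\cdot)$ reversed. For part 1, sufficiency, fix an affine congestion game $\hat\G$ and a strategy $\strategy$ that is an equilibrium under $\dist$. For any type $\signal$,
\begin{align*}
\E[\payoff(\ractoth,\signal)\vert \signal,\strategy;\dist]-\E[\payoff(\ractoth,\signal)\vert \signal,\strategy;\distq]
=\b(\signal)\bigl(\E[h(\ractoth)\vert\signal,\strategy;\dist]-\E[h(\ractoth)\vert\signal,\strategy;\distq]\bigr).
\end{align*}
Since $h$ is affine and increasing, and $\ractoth=\countr(P(\strategy))$, Lemma~\ref{Lemma: wCAD equivalent with increasing expectation} shows the bracket is nonnegative when $\signal\in P(\strategy)$. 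For $\signal\in NP(\strategy)$, apply the lemma to $NP(\strategy)\ni\signal$ and use $\countr(P)=N-1-\countr(NP)$ to see the bracket is nonpositive. Because $\b<0$, the payoff difference is weakly lower under $\dist$ on $P(\strategy)$ and weakly higher on $NP(\strategy)$. Hence passing from $\dist$ back to $\distq$ only strengthens (IC:P) and (IC:NP), so $\strategy\in\eq(\hat\G,\distq)$. This is exactly the reverse inclusion compared with the coordination case.

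For part 2, sufficiency runs the same way, with first-order stochastic dominance in place of expectations. Under $\scad$, conditional on $\signal\in P(\strategy)$, the variable $\countr(P(\strategy))$ is FOSD-higher under $\dist$. Conditional on $\signal\in NP(\strategy)$, $\countr(NP(\strategy))$ is FOSD-higher, so $\countr(P(\strategy))$ is FOSD-lower. Since $h$ is increasing, $\E h(\ractoth)$ moves in the stated directions, and $\b<0$ flips them as before.

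For necessity in both parts, suppose $\dist\ncad\distq$ (respectively $\dist\nscad\distq$). Then there exist $\signal^*\in K$ such that $\E[h(\countr(K))\vert\signal^*;\dist]<\E[h(\countr(K))\vert\signal^*;\distq]$, with $h$ affine for CAD and, for sCAD, $h=\ind_{\{\cdot\ge m\}}$ for a suitable threshold $m$. Let $\strategy$ have $P(\strategy)=K$. Set $\b\equiv-1$. For $\signal\notin K$ take $\a(\signal)$ very negative, so that not acting is dominant. For $\signal\in K\setminus\{\signal^*\}$ take $\a(\signal)$ very positive, so that acting is dominant. For $\signal^*$ set $\a(\signal^*)=\E[h(\countr(K))\vert\signal^*;\dist]$, which makes $\signal^*$ exactly indifferent under $\dist$, so $\strategy\in\eq(\hat\G,\dist)$. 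Under $\distq$, the expected payoff difference at $\signal^*$ becomes $\a(\signal^*)-\E[h(\countr(K))\vert\signal^*;\distq]<0$, so $\signal^*$ strictly prefers not to act and $\strategy\notin\eq(\hat\G,\distq)$, contradicting the inclusion.

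The main obstacle is Assumption~\ref{Assumption: signals are payoff relevant}, which requires distinct types to have distinct payoff functions. I would handle it by perturbing the $\a(\signal)$ values on $K\setminus\{\signal^*\}$ and on $K^c$ to be pairwise distinct, which is harmless because those types have strict dominant actions. It remains to check that $\a(\signal^*)$ differs from all of them, which is guaranteed by taking the dominant ones large in absolute value. A secondary point is to verify that the failure of $\scad$ indeed yields such a $K,\signal^*,m$; this follows by unpacking the definition of $\scad$ via FOSD of $\countr(K)$ conditional on $\sigr_i=\signal$, exactly as in Theorem~\ref{Theorem: sCAD equivalence}.
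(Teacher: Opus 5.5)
Your proposal is correct and follows the paper's route. Sufficiency reverses the sign argument of Theorem~\ref{Theorem: wCAD equivalence}; your explicit use of $\countr(P(\strategy))=\nplayers-1-\countr(NP(\strategy))$ for types in $NP(\strategy)$ is exactly the step needed there. Necessity builds a game in which the strategy with $P(\strategy)=K$ is an equilibrium under $\dist$ with type $\signal^*$ indifferent, and that equilibrium then fails under $\distq$.

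Your calibration through $\alpha(\signal^*)$ with $\beta\equiv-1$ is, if anything, cleaner than the paper's. The paper sets $\beta=0$ off $K$, which is not strictly negative as the definition of a congestion game requires. It also divides by $\E[\ractoth\vert\signal^*;\dist]$, which can be zero.
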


\subsection{Similarity and Collective Action}\label{sec:change}
Recent public discourse and research in political science suggests that social media or access to the same information has enabled larger mass protests.\footnote{See \citet{manacorda2020liberation}, \citet{qin2024social} for example.} Protests are fundamentally collective action problems: Protesting is costly and a successful regime change requires a sufficient number of people to take this costly action, while the benefit of regime change accrues to all. So, while people want to coordinate to ensure a successful mass protest, they are also tempted to free-ride. Importantly, unlike the games we studied so far in this paper, collective action games do not exhibit strategic complementarity. Moreover, equilibrium may not be symmetric. In a different paper~\cite{basak2024protest}, we show that the notion of CAD can still be used to characterize when increased information similarity helps or harms participation in a collective action game of incomplete information. The main insight is that more similar information about the fundamentals in the sense of increased CAD is a double-edged sword: It can help agents coordinate, but can also exacerbate free-riding. We show that more similar information facilitates (impedes) collective action when achieving regime change is sufficiently challenging (easy). 

\subsection{Similarity and  Auctions}\label{Section: second price auction}
A classical question in auction theory (e.g., \cite{milgrom1982theory}) is how different auction formats compare in terms of revenue, when players' valuations are more interdependent. We can use the tools developed in this paper to answer a related question: Given an auction format, how does revenue change if player valuations become more similar? For example, below, we show that the revenue from a second-price auction unambiguously increases when players' valuations become more similar in the CAD order.

Consider a second-price auction for a single object with two bidders. Bidder valuations are drawn from a finite set $\sigset$. Let $\dist$ and $\distq$ be two different joint distributions over the bidder valuations. Let $R(\cdot)$ denote the expected revenue from the second-price auction given a joint distribution, when bidders report their valuations truthfully. Truthful reporting is weakly dominant strategy in this environment. The result below shows that when bidder valuations are more similar in the sense of an increase in the CAD order, then the expected revenue in the second-price auction is higher. The proof is in the appendix. 

\bprop\label{Proposition: second price auction revenue} If $\dist \cad \distq$, then $R(\dist) \ge R(\distq)$. \eprop

\subsection{Games with non-exchangeable signal distributions}
Our baseline setting features a lot of symmetry: Player identities are not payoff-relevant and the joint distribution over agents' signals is exchangeable. In Appendix \ref{subsec:nonexchangeability}, we relax these symmetry assumptions. We consider a class of binary-action coordination games in which payoffs depend on a weighted aggregate action by others (with identity-specific weights) and allow for non-exchangeable signal distributions. We define an analogous notion of  CAD for non-exchangeable joint distributions, and derive a characterization analogous to Theorem~\ref{Theorem: wCAD equivalence}.

\bibliography{ref_orders.bib}

@article{farrell2007coordination,
  title={Coordination and lock-in: Competition with switching costs and network effects},
  author={Farrell, Joseph and Klemperer, Paul},
  journal={Handbook of industrial organization},
  volume={3},
  pages={1967--2072},
  year={2007},
  publisher={Elsevier}
}

@article{farrell1986installed,
  title={Installed base and compatibility: Innovation, product preannouncements, and predation},
  author={Farrell, Joseph and Saloner, Garth},
  journal={The American economic review},
  pages={940--955},
  year={1986},
  publisher={JSTOR}
}

@article{sakovics2012matters,
  title={Who matters in coordination problems?},
  author={Sakovics, Jozsef and Steiner, Jakub},
  journal={American Economic Review},
  volume={102},
  number={7},
  pages={3439--3461},
  year={2012},
  publisher={American Economic Association}
}

@inbook{Morris_Shin_2003,
  author       = {Morris, Stephen and Shin, Hyun Song},
  title        = {Global Games: Theory and Applications},
  booktitle    = {Advances in Economics and Econometrics: Theory and Applications, Eighth World Congress},
  editor       = {Dewatripont, Mathias and Hansen, Lars Peter and Turnovsky, Stephen J.},
  series       = {Econometric Society Monographs},
  volume       = {1},
  year         = {2003},
  pages        = {56--114},
  publisher    = {Cambridge University Press},
  address      = {Cambridge}
}

@article{morris2004coordination,
  title={Coordination risk and the price of debt},
  author={Morris, Stephen and Shin, Hyun Song},
  journal={European Economic Review},
  volume={48},
  number={1},
  pages={133--153},
  year={2004},
  publisher={Elsevier}
}

@article{morris2016common,
  title={Common belief foundations of global games},
  author={Morris, Stephen and Shin, Hyun Song and Yildiz, Muhamet},
  journal={Journal of Economic Theory},
  volume={163},
  pages={826--848},
  year={2016},
  publisher={Elsevier}
}

@article{qin2024social,
  title={Social media and collective action in China},
  author={Qin, Bei and Str{\"o}mberg, David and Wu, Yanhui},
  journal={Econometrica},
  volume={92},
  number={6},
  pages={1993--2026},
  year={2024},
  publisher={Wiley Online Library}
}

@article{basak2024protest,
  title={Similarity of information and collective action},
  author={Basak, Deepal and Deb, Joyee and Kuvalekar, Aditya},
  year={Forthcoming},
  journal = {The American Economic Review},
  publisher={}
}

@article{milgrom1982theory,
  title={A theory of auctions and competitive bidding},
  author={Milgrom, Paul R and Weber, Robert J},
  journal={Econometrica: Journal of the Econometric Society},
  pages={1089--1122},
  year={1982},
  publisher={JSTOR}
}

@inproceedings{anwar2024filter,
  title={Filter Bubble or Homogenization? Disentangling the Long-Term Effects of Recommendations on User Consumption Patterns},
  author={Anwar, Md Sanzeed and Schoenebeck, Grant and Dhillon, Paramveer S},
  booktitle={Proceedings of the ACM on Web Conference 2024},
  pages={123--134},
  year={2024}
}

@inproceedings{chaney2018algorithmic,
  title={How algorithmic confounding in recommendation systems increases homogeneity and decreases utility},
  author={Chaney, Allison JB and Stewart, Brandon M and Engelhardt, Barbara E},
  booktitle={Proceedings of the 12th ACM conference on recommender systems},
  pages={224--232},
  year={2018}
}

@inproceedings{aridor2020deconstructing,
  title={Deconstructing the filter bubble: User decision-making and recommender systems},
  author={Aridor, Guy and Goncalves, Duarte and Sikdar, Shan},
  booktitle={Proceedings of the 14th ACM conference on recommender systems},
  pages={82--91},
  year={2020}
}

@inproceedings{nguyen2014exploring,
  title={Exploring the filter bubble: the effect of using recommender systems on content diversity},
  author={Nguyen, Tien T and Hui, Pik-Mai and Harper, F Maxwell and Terveen, Loren and Konstan, Joseph A},
  booktitle={Proceedings of the 23rd international conference on World wide web},
  pages={677--686},
  year={2014}
}

@techreport{CB24,
  title={Diversity, Disagreement, and Information Aggregation},
  author={Cheng, Xienan and Borgers, Tilman},
  year={2024},
  institution={Working paper}
}

@techreport{de2023robust,
  title={Robust Aggregation of Correlated Information},
  author={de Oliveira, Henrique and Ishii, Yuhta and Lin, Xiao},
  year={2023},
  institution={Working Paper}
}

@article{clemen1985limits,
  title={Limits for the precision and value of information from dependent sources},
  author={Clemen, Robert T and Winkler, Robert L},
  journal={Operations Research},
  volume={33},
  number={2},
  pages={427--442},
  year={1985},
  publisher={INFORMS}
}

@article{jensen2018distributional,
  title={Distributional comparative statics},
  author={Jensen, Martin Kaae},
  journal={The Review of Economic Studies},
  volume={85},
  number={1},
  pages={581--610},
  year={2018},
  publisher={Oxford University Press}
}

@article{mekonnen2022bayesian,
  title={Bayesian comparative statics},
  author={Mekonnen, Teddy and Vizca{\'\i}no, Ren{\'e} Leal},
  journal={Theoretical Economics},
  volume={17},
  number={1},
  pages={219--251},
  year={2022},
  publisher={Wiley Online Library}
}

@article{bergemann2013robust,
  title={Robust predictions in games with incomplete information},
  author={Bergemann, Dirk and Morris, Stephen},
  journal={Econometrica},
  volume={81},
  number={4},
  pages={1251--1308},
  year={2013},
  publisher={Wiley Online Library}
}

@article{angeletos2007efficient,
  title={Efficient use of information and social value of information},
  author={Angeletos, George-Marios and Pavan, Alessandro},
  journal={Econometrica},
  volume={75},
  number={4},
  pages={1103--1142},
  year={2007},
  publisher={Wiley Online Library}
}

@article{H71,
 author = {Jack Hirshleifer},
 journal = {The American Economic Review},
 number = {4},
 pages = {561--574},
 publisher = {American Economic Association},
 title = {The Private and Social Value of Information and the Reward to Inventive Activity},
 urldate = {2023-03-20},
 volume = {61},
 year = {1971}
}

@article{cookson2023social,
  title={Social media as a bank run catalyst},
  author={Cookson, J Anthony and Fox, Corbin and Gil-Bazo, Javier and Imbet, Juan Felipe and Schiller, Christoph},
  journal={Available at SSRN},
  volume={4422754},
  year={2023},
  publisher={Working Paper, University of Colorado at Boulder}
}

@article{gam2023does,
  title={Does Social Media Make Banks More Fragile? Evidence from Twitter},
  author={Gam, Yong Kyu and Liu, Chunbo and Xu, Yongxin},
  journal={Evidence from Twitter (November 18, 2023)},
  year={2023}
}

@article{manacorda2020liberation,
  title={Liberation technology: Mobile phones and political mobilization in Africa},
  author={Manacorda, Marco and Tesei, Andrea},
  journal={Econometrica},
  volume={88},
  number={2},
  pages={533--567},
  year={2020},
  publisher={Wiley Online Library}
}

@article{morris2002social,
  title={Social value of public information},
  author={Morris, Stephen and Shin, Hyun Song},
  journal={The American economic review},
  volume={92},
  number={5},
  pages={1521--1534},
  year={2002},
  publisher={American Economic Association}
}

@article{nechushtai2024more,
  title={More of the same? Homogenization in news recommendations when users search on Google, YouTube, Facebook, and Twitter},
  author={Nechushtai, Efrat and Zamith, Rodrigo and Lewis, Seth C},
  journal={Mass Communication and Society},
  volume={27},
  number={6},
  pages={1309--1335},
  year={2024},
  publisher={Taylor \& Francis}
}

@article{muller2000some,
  title={Some remarks on the supermodular order},
  author={M{\"u}ller, Alfred and Scarsini, Marco},
  journal={Journal of multivariate analysis},
  volume={73},
  number={1},
  pages={107--119},
  year={2000},
  publisher={Elsevier}
}

@book{meyer1990interdependence,
  title={Interdependence in multivariate distributions: stochastic dominance theorems and an application to the measurement of ex post inequality under uncertainty},
  author={Meyer, Margaret A},
  year={1990},
  publisher={Nuffield College}
}

@article{awaya2022common,
  title={Commonality of Information and Commonality of Beliefs},
  author={Awaya, Yu and Krishna, Vijay},
  journal={Theoretical Economics},
  volume={Forthcoming},
  number={},
  pages={},
  year={2025},
  publisher={}
}

@article{meyer2015beyond,
  title={Beyond correlation: Measuring interdependence through complementarities},
  author={Meyer, Margaret and Strulovici, Bruno},
  journal = {Working paper},
  year={2015},
  publisher={University of Oxford}
}

@article{shadmehr2011collective,
  title={Collective action with uncertain payoffs: coordination, public signals, and punishment dilemmas},
  author={Shadmehr, Mehdi and Bernhardt, Dan},
  journal={American Political Science Review},
  volume={105},
  number={4},
  pages={829--851},
  year={2011},
  publisher={Cambridge University Press}
}

@article{gossner2000comparison,
  title={Comparison of information structures},
  author={Gossner, Olivier},
  journal={Games and Economic Behavior},
  volume={30},
  number={1},
  pages={44--63},
  year={2000},
  publisher={Elsevier}
}

@article{cherry2012strategically,
  title={Strategically valuable information},
  author={Cherry, Josh and Smith, Lones},
  journal={Available at SSRN 2722234},
  year={2012}
}

@article{morris1998unique,
  title={Unique equilibrium in a model of self-fulfilling currency attacks},
  author={Morris, Stephen and Shin, Hyun Song},
  journal={American Economic Review},
  pages={587--597},
  year={1998},
  publisher={JSTOR}
}

@article{bergemann2016bayes,
	Author = {Bergemann, Dirk and Morris, Stephen},
	Journal = {Theoretical Economics},
	Number = {2},
	Pages = {487--522},
	Publisher = {Wiley Online Library},
	Title = {Bayes correlated equilibrium and the comparison of information structures in games},
	Volume = {11},
	Year = {2016}}

@book{muller2002comparison,
  title={Comparison methods for stochastic models and risks},
  author={M{\"u}ller, Alfred and Stoyan, Dietrich},
  volume={389},
  year={2002},
  publisher={Wiley}
}

@article{cheng2022reporting,
  title={Reporting sexual misconduct in the\# MeToo era},
  author={Cheng, Ing-Haw and Hsiaw, Alice},
  journal={American Economic Journal: Microeconomics},
  volume={14},
  number={4},
  pages={761--803},
  year={2022}
}

@article{konrad2016coordination,
  title={Coordination and the fight against tax havens},
  author={Konrad, Kai A and Stolper, Tim BM},
  journal={Journal of International Economics},
  volume={103},
  pages={96--107},
  year={2016},
  publisher={Elsevier}
}

@article{vives2014strategic,
  title={Strategic complementarity, fragility, and regulation},
  author={Vives, Xavier},
  journal={The Review of Financial Studies},
  volume={27},
  number={12},
  pages={3547--3592},
  year={2014},
  publisher={Oxford University Press}
}

@article{goldstein2005demand,
  title={Demand--deposit contracts and the probability of bank runs},
  author={Goldstein, Itay and Pauzner, Ady},
  journal={the Journal of Finance},
  volume={60},
  number={3},
  pages={1293--1327},
  year={2005},
  publisher={Wiley Online Library}
}

@article{meyer2012increasing,
  title={Increasing interdependence of multivariate distributions},
  author={Meyer, Margaret and Strulovici, Bruno},
  journal={Journal of Economic Theory},
  volume={147},
  number={4},
  pages={1460--1489},
  year={2012},
  publisher={Elsevier}
}

@article{cripps2008common,
  title={Common learning},
  author={Cripps, Martin W and Ely, Jeffrey C and Mailath, George J and Samuelson, Larry},
  journal={Econometrica},
  volume={76},
  number={4},
  pages={909--933},
  year={2008},
  publisher={Wiley Online Library}
}

@article{epstein1980increasing,
  title={Increasing generalized correlation: a definition and some economic consequences},
  author={Epstein, Larry G and Tanny, Stephen M},
  journal={Canadian Journal of Economics},
  pages={16--34},
  year={1980},
  publisher={JSTOR}
}

@article{harrison2021global,
  title={Global games with strategic substitutes},
  author={Harrison, Rodrigo and Jara-Moroni, Pedro},
  journal={International Economic Review},
  volume={62},
  number={1},
  pages={141--173},
  year={2021},
  publisher={Wiley Online Library}
}

@article{baliga2012strategy,
  title={The strategy of manipulating conflict},
  author={Baliga, Sandeep and Sj{\"o}str{\"o}m, Tomas},
  journal={American Economic Review},
  volume={102},
  number={6},
  pages={2897--2922},
  year={2012},
  publisher={American Economic Association}
}

@article{duffy2005learning,
  title={Learning, information, and sorting in market entry games: theory and evidence},
  author={Duffy, John and Hopkins, Ed},
  journal={Games and Economic behavior},
  volume={51},
  number={1},
  pages={31--62},
  year={2005},
  publisher={Elsevier}
}

@article{katz1985network,
  title={Network externalities, competition, and compatibility},
  author={Katz, Michael L and Shapiro, Carl},
  journal={The American economic review},
  volume={75},
  number={3},
  pages={424--440},
  year={1985},
  publisher={JSTOR}
}

@book{joe1997multivariate,
  title={Multivariate models and multivariate dependence concepts},
  author={Joe, Harry},
  year={1997},
  publisher={CRC press}
}

\appendix

\section{Appendix: Proofs}
\subsection{Proof of Lemma~\ref{Lemma: wCAD equivalent with increasing expectation}}
\bprf 
First, we prove that 1. $\iff$ 2. Suppose that $\dist \cad \distq$ and for some $K$ and $\signal \in K$ and $i,j$, we have $Prob(\sigr_j \in K \vert \sigr_i = \signal) < Prob(\sigry_j \in K \vert \sigry_i = \signal)$. Then, at least for some $\signal' \notin K$, $Prob(\sigr_j = \signal' \vert \sigr_i = \signal) < Prob(\sigry_j = \signal' \vert \sigry_i = \signal)$, contradicting $\dist \cad \distq$. For the converse, suppose that the inequality holds for all $i,j$, $\signal$ and $K \ni \signal$, but $\dist \ncad \distq$. Then, for some $\signal'\neq \signal$, $Prob(\sigr_j = \signal' \vert \sigr_i = \signal) > Prob(\sigry_j = \signal' \vert \sigry_i = \signal)$. But then, $K = \sigset \backslash \{\signal'\}$ would have $$Prob(\sigr_j \in K \vert \sigr_i = \signal) > Prob(\sigry_j \in K \vert \sigry_i = \signal), $$ a contradiction. 

Next, we prove 1. $\iff$ 3. $\dist \cad \distq$ 
\begin{align*}
     &\Longleftrightarrow Prob(\sigr_j \in K \bigg\vert \sigr_i = \signal) \ge Prob(\sigry_j \in K \bigg\vert \sigry_i = \signal) \quad \forall \signal, K \ni \signal, \quad 
    \\
    & \Longleftrightarrow \E\left[ \ind_{\sigr_j \in K} \bigg\vert \sigr_i = \signal\right] \ge \E\left[ \ind_{\sigry_j \in K} \bigg\vert \sigry_i = \signal\right] \quad \forall \signal, K \ni \signal. \\
    & \Longleftrightarrow (\nplayers-1) \E\left[ \ind_{\sigr_j \in K} \bigg\vert \sigr_i = \signal\right] \ge (\nplayers-1) \E\left[ \ind_{\sigry_j \in K} \bigg\vert \sigry_i = \signal\right] \quad \forall \signal, K \ni \signal. \\
    & \Longleftrightarrow \E\left[ \sum_{j\neq i} \ind_{\sigr_j \in K} \bigg\vert \sigr_i = \signal\right] \ge  \E\left[ \sum_{j\neq i} \ind_{\sigry_j \in K} \bigg\vert \sigry_i = \signal\right] \quad \forall \signal, K \ni \signal. 
\end{align*}
\eprf

\subsection{Proof of Theorem~\ref{Theorem: wCAD equivalence}}

\textbf{Step 1: $(1) \implies (2)$} 

Suppose that $\dist \cad \distq$. Let $\strategy \in \eq(\G,\distq)$ for some game $\G$ with $\b(\signal) \ge 0$ for all $\signal \in \sigset$. We show that~$\strategy$  remains an equilibrium under the more CAD similar information structure~$\dist$, i.e., $\strategy \in \eq(\G,\dist)$.

Since $\strategy$ constitutes an equilibrium under $\distq$, (\ref{Equation: IC to play a=1}) must hold for any $\signal \in P(\sigma)$, and  (\ref{Equation: IC to play a=0}) must hold for any $\signal \in NP(\sigma)$. Fix an agent~$i$. We compare the net payoffs from taking action~$a=1$ under $\dist$ and $\distq$. 
\begin{align*}
&\E[ \payoff(\ractoth, \signal) \vert \signal,\strategy;\dist ] - \E[ \payoff(\ractoth, \signal) \vert \signal,\strategy;\distq ] \\
=&\b(s) \left[ \E[ h(\ractoth) \vert \signal,\strategy;\dist ] - \E[h(\ractoth) \vert \signal,\strategy;\distq ] \right].
\end{align*}
Since $h(\cdot)$ is affine and increasing,  $h(y) = k y + l$ for some $k > 0$. So we can rewrite the above as follows: 

\begin{align*}
=&k\b(s) \left( \E\left[\ractoth\bigg\vert \signal,\strategy;\dist\right] - \E\left[ \ractoth\bigg\vert \signal,\strategy;\distq\right]\right)\\
=&k \b(s) \left( \E\left[ \sum_{j\neq i}\ind_{\sigrs_j \in P(\strategy)} \bigg\vert \signal;\dist\right] -\E\left[ \sum_{j\neq i}\ind_{\sigrs_j \in P(\strategy)} \bigg\vert \signal;\distq\right] \right)\\
=& \b(\signal)  k \left[\E\left[ \countr( P(\strategy)) \bigg\vert \signal;\dist\right] -\E\left[ \countr( P(\strategy)) \bigg\vert \signal;\distq\right]\right],
\end{align*}
where, recall that $\countr(P(\strategy))$ counts the number of players other than player $i$ with signal in set $P(\strategy)$. 
By Lemma~\ref{Lemma: wCAD equivalent with increasing expectation} and since $\b(\signal) \ge 0 \forall \signal \in \sigset$, the above expression is non-negative for $\signal \in P(\strategy)$ and non-positive for $\signal \in NP(\strategy)$. Therefore, for strategy profile $\strategy$, we have 
\begin{align}\label{complementarityproof}
 \E[ \payoff(\ractoth, \signal) \vert \signal,\strategy;\dist ] - \E[ \payoff(\ractoth, \signal) \vert \signal,\strategy;\distq ] &\ge  0 \quad \text{ if } \signal \in P(\strategy)\notag \\
 &\le  0 \quad \text{ if } \signal \in NP(\strategy) 
\end{align}
Therefore, \eqref{Equation: IC to play a=1} continues to hold for all $\signal \in P(\strategy)$ under $\dist$ and \eqref{Equation: IC to play a=0} holds for all $\signal \in NP(\strategy)$ under $\dist$.

\medskip
 
\textbf{Step 2: $(2) \implies (3), (4)$} 

Since $\dist$ and $\distq$ have the same marginals, and for any $\G$, $\eq(\G, \dist) \supseteq \eq(\G,\distq)$,
$$\maxP(\G, \dist)=\max_{\strategy \in \eq(\G,\dist)} \dist(P(\strategy)) \ge \max_{\strategy \in \eq(\G,\distq)} \distq(P(\strategy))=\maxP(\G,\distq)$$ 
$$\minP(\G, \dist)= \min_{\strategy \in \eq(\G,\dist)} \dist(P(\strategy)) \le \min_{\strategy \in \eq(\G,\distq)} \distq(P(\strategy))= \minP(\G,\distq).$$

\medskip

\textbf{Step 3: $\mathbf{(3)}\;\boldsymbol{\implies}\;\mathbf{(1)}$}

Suppose that $\maxP(\G, \dist) \ge \maxP(\G,\distq)$ for all $\G$ with $\b(\signal) \ge 0$ for all $\signal \in \sigset$, but $\dist \ncad \distq$. Therefore, $\exists \signal^*\in \sigset$ and $K \ni \signal^*$ such that, 
$$\dist(\sigrs_j \in K \vert \sigrs_i = \signal^*) 
< \distq(\sigrs_j \in K \vert \sigrs_i = \signal^*).$$
The proof approach will be to establish a contradiction by constructing a game $\G$, and a strategy profile $\strategy$ such that the following hold: 
\begin{enumerate}[(i)]
    \item $\strategy \in \eq(\G, \distq)$ with $P(\strategy) = K$,
    \item $ \maxP(\G,\distq) = \distq(P(\strategy)) > \distq(P(\strategy'))$ for all $\strategy' \in \eq(\G,\distq)$ with $\strategy'\neq \strategy$,
    \item $\strategy \notin \eq(\G,\dist)$, and
    \item no $\strategy' \in \eq(\G,\dist)$ such that $\dist(P(\strategy')) \ge \dist(P(\strategy))$.
\end{enumerate}
\noindent 

Consider a game $\G$ with:
\[
\begin{array}{ll}
 \a(\signal) &= \begin{cases} -2 & \text{ if } \signal \notin K\\
 -\distq_{\signal^*}(K) & \text{ if } \signal = \signal^*, \\
 1 & \text{ otherwise.} \end{cases}\\
 h(\actoth) &= \actoth, \\
 \b(\signal) &= \frac{1}{\nplayers-1}.
\end{array}
\]

Under these parameters, $\E[d(\ractoth,\signal) \vert \signal;\distq) = \a(\signal) + \b(\signal) \E[\ractoth\vert \signal;\distq] = \a(\signal) + \distq_\signal(K)$. Consider the strategy $\strategy$ with $P(\strategy)=K$. For all $\signal \in \sigset$,
\begin{align*}
   \E[d(\ractoth,\signal\vert \signal,\distq)]=\begin{cases} -2 + \distq_\signal(K) < 0 & \text{ if } \signal \notin K\\
    -\distq_{\signal^*}(K) +  \distq_{\signal^*}(K) = 0 & \text{ if } \signal = \signal^*\\
    1 + \distq_\signal(K) > 0 & \text{ if } \signal \in K, \signal \neq \signal^*.\end{cases}
\end{align*}
Therefore, under $\distq$, \eqref{Equation: IC to play a=1} and \eqref{Equation: IC to play a=0} are satisfied, so $\strategy \in \eq(\G,\distq).$ This gives us feature (i).

Furthermore, for any $\signal \notin K$,  regardless of the strategy others play, $$\E[d(\ractoth,\signal\vert \signal,\distq)]<0.$$ This means participation is a dominated strategy for these types, which makes $\sigma$ the maximal equilibrium, i.e., $\maxP(\G,\distq) = \distq(P(\strategy))$. This gives us feature (ii).

Recall that for type $\signal^*$,
$\dist(\sigrs_j \in K \vert \sigrs_i = \signal^*)
< \distq(\sigrs_j \in K \vert \sigrs_i = \signal^*).$
Therefore,
\begin{align*}
\E\left[ \payoff(\ractoth,\signal^*) \mid \signal^*; \dist\right]
    &=\dist_{\signal^*}( K )
- \distq_{\signal^*}(K) < 0,
\end{align*}
This means while type $\signal^*$ in indifferent under $\distq$, she strictly prefers not participation under $\dist$. Thus, $\strategy \notin \eq(\G, \dist)$. This gives us feature (iii).

Since participation is a dominated strategy for any $s\notin K$ (regardless of the information structure), there is no other equilibrium under $\dist$ which leads to greater participation that $\dist(P(\strategy))$.  This gives us feature (iv).

Finally, since $\dist$ and $\distq$ has the same marginal distribution, it follows from the above steps that $\maxP(\G,\dist)\leq \dist(P(\strategy))=\distq(P(\strategy))= \maxP(\G,\distq)$, which contradicts  $(3)$.
\bigskip

\textbf{Step 4: $\mathbf{(4)}\;\boldsymbol{\implies}\;\mathbf{(1)}$}

This argument is similar to Step 3.
If (1) is not true, then $\exists \signal^*\in \sigset$ and $K \ni \signal^*$ such that,
$\dist(\sigrs_j \in K \vert \sigrs_i = \signal^*)
< \distq(\sigrs_j \in K \vert \sigrs_i = \signal^*).$ Let $K^c:=\sigset \backslash K$. Then,
$$\dist(\sigrs_j \in K^c\vert \sigrs_i = \signal^*)
> \distq(\sigrs_j \in K^c \vert \sigrs_i = \signal^*).$$
The proof approach will be to establish a contradiction by constructing a game $\G$, and a strategy profile $\strategy$ such that the following hold:
\begin{enumerate}[(i)]
    \item $\strategy \in \eq(\G, \distq)$ with $P(\strategy) = K^c$,
    \item $ \minP(\G,\distq) = \distq(P(\strategy)) < \distq(P(\strategy'))$ for all $\strategy' \in \eq(\G,\distq)$ with $\strategy'\neq \strategy$,
    \item $\strategy \notin \eq(\G,\dist)$, and
    \item no $\strategy' \in \eq(\G,\dist)$ such that $\dist(P(\strategy')) \le \dist(P(\strategy))$.
\end{enumerate}
As before, consider the following game:
\begin{align*}
    \a(\signal) =&  \begin{cases} 1 & \text{ if } \signal \in K^c \\
    -\distq_{\signal^*}(K^c) & \text{ if } \signal = \signal^*\\
    -2 & \text{ otherwise }
    \end{cases}
    \\
    \b(\signal) = & \frac{1}{\nplayers-1}\\
    h(\actoth) = & \actoth.
\end{align*}

Let $\strategy$ be the strategy profile with participation set $P(\strategy)=K^c$. Then,
\begin{align*}
    \E[d(\ractoth,\signal \vert \signal, \distq)] = \begin{cases}
        1 + \distq_\signal(K^c) > 0 & \text{ if } \signal \in K^c\\
        -\distq_{\signal^*}(K^c) + \distq_{\signal^*}(K^c) = 0 & \text{ if } \signal = \signal^* \\
        -2 + \distq_\signal(K^c) < 0 & \text{ if } \signal \notin K^c, \signal \neq \signal^*.
    \end{cases}
\end{align*}
Therefore, $\strategy \in \eq(\G,\distq).$ This gives us feature (i).

Note that not participation is a dominated strategy for any type $s\in K^c$ since  regardless of what others play, $\E[d(\ractoth,\signal\vert \signal;\distq)] \ge \a(\signal) > 0$. This makes $\strategy$ the minimal participation equilibrium.  This gives us feature (ii).

However, $$\E[d(\ractoth,\signal^* \vert \signal^*,\dist)] = -\distq_{\signal^*}(K^c) + \dist_{\signal^*}(K^c) > 0.$$
This means while type $\signal^*$ was indifferent under $\distq$, she strictly prefers participation under $\dist$. Therefore, $\sigma$ is no longer an equilibrium under $\dist$. This gives us feature (iii).

Since, regardless of the information structure, any type $s\in K^c$ always participates, the minimal participation under $\dist$ must be at least $\dist(P(\sigma))$. This gives us feature (iv).

Finally, since $\dist$ and $\distq$ has the same marginals, it follows from the above steps that and $\minP(\G,\dist) \geq \dist(P(\sigma))=\distq(P(\sigma))= \minP (\G,\distq)$, which contradicts (4).

\subsection{Proof of Theorem~\ref{Theorem: CCAD with state}}

We first establish a useful property of affine independence.
\blemma\label{Lemma: affine independence and separation} Let $x \in \real^n$ and $K, L,  \subset \real^n$ be two finite, disjoint sets such that, $x \notin K \bigcup L$ and $K \bigcup L\bigcup\{x\}$  is affinely independent. Then, $\exists \lmv \in \real^n$ such that $$\max_{k \in K} \lm' k < \lm' x < \min_{l \in L} \lm' l.$$
\elemma

\bprf
Since $K \bigcup L \bigcup\{ x\}$ is affinely independent, $\{k-x: k \in K\} \bigcup \{ l-x: l \in L\}$ is linearly independent. Complete this set to a basis, $\{ x_1, \ldots, x_n\}$ and define a linear map as follows:
\begin{align*}
    T(x_i) := \begin{cases} -1 & \text{ if } x_i = k -x \text{ for some } k \in K\\
    1 & \text{ if } x_i = l - x \text{ for some } l \in L\\
    0 & \text{ if } x_i \in T \backslash \left\{ \{k-x: k \in K\} \bigcup \{ l-x: l \in L\}\right\}.
    \end{cases}
\end{align*}
$T(y)$ for any $y$ is defined using the above basis vectors. By definition, $T(k-x) = T(k) - T(x) = -1 \implies T(k) < T(x) $ for all $k \in K$ and $T(l -x) = T(l) - T(x) = 1 \implies T(l) > T(x)$ for all $l \in L$. Finally, by the Riesz representation theorem, $\exists \lm \in \real^n$ such that $T(y) = \lm' y$ for all $y \in \real^n$.
\eprf

\bprf[Proof of Theorem~\ref{Theorem: CCAD with state}] To prove $(1)$, define $T \subseteq \states$ with $\dist = \distq^\state$ for all $\state \in \states\backslash T$, and suppose that $\dist^\state \ccad \distq^\state$ for all $\state \in T$. Consider a common-value affine coordination game $\G$. Affine, increasing  $h(\cdot)$ means $h(x) = k x + l$ for some $k \ge 0$. Suppose that $\strategy$ is a cutoff equilibrium under $\distq$ with a cutoff $\sigt$. Therefore,
\begin{align*}
    \E\left[\payoff(\ractoth, \stater) \vert \signal,\strategy;\distq\right] \ge 0 \quad \forall \signal \ge \tilde \signal\\
    \E\left[\payoff(\ractoth, \stater) \vert \signal,\strategy;\distq\right] \le 0 \quad \forall \signal < \tilde \signal
\end{align*}
For any $\signal$,
\begin{align*}
&\E\left[ \payoff(\ractoth, \stater) \vert \signal;\strategy,\dist \right] - \E\left[ \payoff(\ractoth, \stater) \vert \signal;\strategy,\distq \right]
\\
& =\E\left[ \a(\stater)+\b(\stater) h(\ractoth)\vert \signal;\strategy,\dist \right] - \left[\a(\stater)+\b(\stater) h(\ractoth) \vert \signal;\strategy,\distq \right]\\
& =\E\left[\b(\stater) k \E\left[ \ractoth\vert \stater, \dist^{\stater}\right]\middle| \signal;\strategy \right] -\E\left[\b(\stater) k \E\left[ \ractoth\vert \stater, \distq^{\stater}\right]\middle| \signal;\strategy \right] \\
& =\E\left[\b(\stater) k [ \cdist^{\stater} (\sigtup) - \cdistq^{\stater }(\sigtup)] \middle| \signal; \strategy\right]
\end{align*}
Since $\dist^\state \ccad \distq^\state$ for all $\state \in T$ and $\dist^\state = \distq^\state$ for all $\state \notin T$, the above expression is non-negative if $\signal \ge \sigt$ and non-positive otherwise. This implies that if $\strategy$ was an equilibrium under $\distq$, then it remains an equilibrium under $\dist$, because \eqref{Equation: IC to play a=1} continues to hold for all $\signal \ge \sigt$, and \eqref{Equation: IC to play a=0} continues to hold for all $\signal < \sigt$. Thus, $\strategy \in \eqc(\G,\dist)$, thus proving (1). \smallskip

Towards proving $(2)$, notice that $(a) \implies (b)$ was proved above, while $(b) \implies (c)$ is obvious. Therefore, we only need to prove that
$(c) \implies (a)$.

Suppose that $(\eqmaxp(\G,\dist), -\eqminp(\G,\dist)) \ge (\eqmaxp(\G,\distq), -\eqminp(\G,\distq))$ for all common-value affine coordination games $\G$ but, $\dist^{\state^*}\nccad \distq^{\state^*}$ for some $\state^* \in T$. Then there exists $\tilde \signal, \sighat$ such that at least one of the following holds:
\begin{enumerate}
    \item $\sighat \le \tilde \signal$ and $\dist^{\state^*}_{\tilde \signal}(\sighup) < \distq^{\state^*}_{\tilde \signal}(\sighup)$, or
    \item $\sighat \ge \tilde \signal$ and $\dist^{\state^*}_{\tilde \signal}(\sighdn) < \distq^{\state^*}_{\tilde \signal}(\sighdn)$.
\end{enumerate}
\smallskip
\noindent\textbf{Case 1: Suppose $\sighat \le \tilde \signal$ and $\dist^{\state^*}_{\tilde \signal}(\sighup) < \distq^{\state^*}_{\tilde \signal}(\sighup)$}

\noindent
We will construct a common-value affine coordination game with payoff functions $\payoff(\actoth,\state) = \a(\state) + \b(\state) h(\actoth)$, and a strategy profile $\strategy$ with \begin{equation}\label{strategyconstruction1}
P(\strategy)=\sighup \qquad \textrm{and} \qquad  NP(\strategy)=\sigset\setminus \sighup,\end{equation}
such that,
\begin{enumerate}[(i)]
    \item $\strategy \in \eqc(\G,\distq)$, but $\strategy \notin \eqc(\G,\dist)$.
    \item $\strategy$ is the equilibrium with maximal participation in $\distq$.
    \item No $\strategy'$ such that $\distq(P(\strategy')) > \distq(P(\strategy))$ is an equilibrium under $\dist$. Since the marginal distribution of $\dist$ and $\distq$ coincide, this would complete the proof in Case 1.
\end{enumerate}
\noindent
To this end, define \begin{align*}
    A:=& \{\psig: \signal \in \sigset, \signal < \sighat\} \\
    \text{ and } B:=& \{ \psig: \signal\in\sigset, \signal \ge \sighat, \signal \neq \tilde \signal\}.
\end{align*}
$A \bigcup \{\posterior(\tilde\signal)\} \bigcup B$ is an affinely independent set. By Lemma~\ref{Lemma: affine independence and separation}, there exists $\at \in \real^\nplayers$ such that $$\max_{\psig \in A} \at' \psig < \at' \posterior(\tilde\signal) < \min_{\psig \in B} \at' \psig.$$
Notice that $\at' \posterior(\cdot) = \E_{\posterior(\cdot)} [ \at(\stater)]$. Therefore, $\exists \at : \states \to \real$ such that,
\begin{align*}
    \max_{\psig \in A} \E_{\psig}[\at(\stater)] < \E_{\posterior(\tilde \signal)}[\at(\stater)] < \min_{\psig \in B} \E_{\psig}[\at(\stater)].
\end{align*}
\noindent
Define, for any $\signal$,
\[l(\distq, \signal):= \psig(\state^*) \cdistq^{\state^*}(\sighup).\]
Define, for any $k \in \real$,
\begin{align*}
     \De_1(k) &:=  \left[\E_{\posterior(\sigt)}[\at(\stater)] + kl(\distq,\sigt)\right]
     - \left[\max_{\psig \in A} \E_{\psig}[\at(\stater)] + k \right], \\
    \text{ and } \De_2(k) &:= \min_{\psig \in B} \left[\E_{\psig}[\at(\stater)] + kl(\distq,\signal)\right]
    - \left[\E_{\posterior(\sigt)}[\at(\stater)] + kl(\distq,\sigt)\right]
\end{align*}

Notice that $(\De_1(0), \De_2(0))> (0,0)$. By continuity in $k$, $\exists k > 0$ such that $(\De_1(k), \De_2(k)) > (0,0)$. Fix any such $k$, and choose a scalar $a$ so that $$\E_{\posterior(\sigt)} [ \at(\stater) + a] + k l(\distq,\sigt) = 0.$$ Define $$\a(\cdot) := \at(\cdot) + a,$$
and let $$\displaystyle \beta(\state) =\dfrac{ k }{(\nplayers-1)}\ind_{\state = \state^*}.$$

\noindent
Consider a common-value affine coordination game $\G$ with $h(x) = x$, $(\a,\b)$ defined above, and the strategy profile in (\ref{strategyconstruction1}) above.

Note that we have
\begin{align*}
    & \max_{\psig \in A} \left\{\E_{\psig}[\a(\stater)] + kl(\distq,\signal) \right\} \\ & < \max_{\psig \in A} \left\{\E_{\psig}[\a(\stater)] + k \right\}\\
    & < \E_{\posterior(\sigt)}[\a(\stater)] + k l(\distq,\sigt) \\
     = 0 &< \min_{\psig\in B} \left\{\E_{\psig}[\a(\stater)] + k l(\distq,\signal)\right\}
\end{align*}
\noindent
Further, for any $\signal$, $$\E_{\psig}[\b(\stater) (\nplayers-1) \cdistq^{\stater}(\sighup)] = k l(\distq,\signal).$$
Therefore the following conditions are satisfied.
\begin{align}
    \max_{\psig \in A} \left\{\E_{\psig}[\a(\stater)] + \E_{\psig}[\b(\stater) (\nplayers-1) \cdistq^{\stater}(\sighup)] \right\} <&   0 \notag \\
    \E_{\posterior(\sigt)}[\a(\stater)] + \E_{\posterior(\sigt)}[\b(\stater) (\nplayers-1)\distq^{\stater}_{\sigt}(\sighup)] =& 0 \label{Equation: cutoff equilibrium IC with beta depend on state}\\
    \min_{\psig\in B} \left\{\E_{\psig}[\a(\stater)] + \E_{\psig}[\b(\stater) (\nplayers-1)\cdistq^{\stater}(\sighup)]\right\} >& 0 \notag.
\end{align}
\noindent

Equations~\eqref{Equation: cutoff equilibrium IC with beta depend on state} simply imply that the strategy $P(\strategy)=\sighup$ and $NP(\strategy)=\sigset\setminus \sighup$ constitutes an equilibrium under $\distq$, i.e.,
$\strategy \in \eqmaxp(\G,\distq)$. In particular, \eqref{Equation: IC to play a=1} holds with an equality for $\tilde \signal$, and holds strictly for any $\signal > \sighat$ such that $\signal \neq \tilde\signal$.
Finally, notice that, since $\dist^{\state^*}_{\tilde \signal}(\sighup) < \distq^{\state^*}_{\tilde \signal}(\sighup)$,
\begin{align*}
    \E_{\posterior(\tilde\signal)}[\a(\stater)] + \E_{\posterior(\sigt)}[\b(\stater) \dist^{\stater}_{\tilde \signal}(\sighup)] &= \E_{\posterior(\tilde\signal)}[\a(\stater)] + k \posterior(\sigt)(\state^*)\dist_{\sigt}^{\state^*}(\sighup) \\
    &< \E_{\posterior(\tilde\signal)}[\a(\stater)] + k l(\distq,\sigt) =  0.
\end{align*}
Therefore, $\strategy \notin \eqc(\G,\dist)$.

\noindent Now, we will show that if $\strategy'$ is such that $P(\strategy') \supset P(\strategy)$, then $\strategy' \notin \eqc(\G,\dist) \bigcup \eqc(\G,\distq)$. To this end, for any such $\strategy'$, we have $P(\strategy') = \signal'^\uparrow$ for some $\signal' < \sighat$. For $\strategy'$ to be an equilibrium, \eqref{Equation: IC to play a=1} must be satisfied for all $\signal \ge \signal'$. Since, $\signal' < \sighat$, $\signal' \in A$. However, for all $\signal \in A$, we have that,
\begin{align*}
    &  \E_{\psig}[\a(\stater)] + k\max\{l(\distq,\signal), l(\dist,\signal)\}
    \\ & \le \E_{\psig}[\a(\stater)] + k \\
    & < \E_{\posterior(\sigt)}[\a(\stater)] + k l(\distq,\sigt) \\
     & = 0.
\end{align*}
Therefore, \eqref{Equation: IC to play a=1} cannot be satisfied for any relevant $\signal \in A \bigcap P(\strategy')$ under $\dist$ or $\distq$. Hence, if $\strategy' \in \eqc(\G,\dist)$, then $P(\strategy') \subset P(\strategy)$. Therefore, $\eqmaxp(\G,\dist) < \eqmaxp(\G,\distq)$, a contradiction.

\smallskip

\noindent \textbf{Case 2: $\sighat \ge \tilde \signal$ and $\dist^{\state^*}_{\tilde \signal}(\sighdn) < \distq^{\state^*}_{\tilde \signal}(\sighdn)$: }

\noindent The proof is nearly identical to the previous case. We will construct a strategy profile $\strategy$ such that,
\begin{enumerate}[(i)]
    \item $\strategy \in \eqc(\G,\distq)$, but $\strategy \notin \eqc(\G,\dist)$.
    \item $\strategy$ is the equilibrium with minimal participation in $\distq$.
    \item No $\strategy'$ such that $\distq(P(\strategy')) < \distq(P(\strategy)$ is an equilibrium under $\dist$. Since the marginal distribution of $\dist$ and $\distq$ coincide, this would complete the proof under Case 2.
\end{enumerate} Define
\begin{align*}
    A:=& \{ \psig: \signal \in \sigset, \signal \le \sighat, , \signal \neq \tilde\signal\}\\
    \text{ and } B: =& \{ \psig : \signal \in \sigset, \signal > \sighat\}.
\end{align*}

Notice that $\dist^{\state^*}_{\tilde \signal}(\sighdn) < \distq^{\state^*}_{\tilde \signal}(\sighdn)$ $\Longleftrightarrow $ $\dist^{\state^*}_{\tilde \signal}(B) > \distq^{\state^*}_{\tilde \signal}(B)$ since $B = \sigset \backslash \{ \sighdn\}$.
As before, because $A \bigcup \{\posterior(\tilde\signal)\} \bigcup B$ is an affinely independent set, Lemma~\ref{Lemma: affine independence and separation} implies that $\exists \at : \states \to \real$ such that,
\begin{align*}
    \max_{\psig \in A} \E_{\psig}[\at(\stater)] < \E_{\posterior(\tilde\signal)}[\at(\stater)] <   \min_{\psig \in B} \E_{\psig}[\at(\stater)].
\end{align*}

\noindent
Define for any $\signal$,
\[l(\distq,\signal):=\psig(\state^*) \cdistq^{\state^*}(B)\]

Define, for any $k \in \real$,
\begin{align*}
     \De_1(k) &:= \left( (\E_{\posterior(\sigt)}[\at(\stater)] + kl(\distq,\sigt)) - (\max_{\psig \in A} \E_{\psig}[\at(\stater)] +k l(\distq,\signal)   ) \right),  \\
    \text{ and } \De_2(k) &:= \left( \min_{\psig\in B} \E_{\psig}[\at(\stater)]   - (\E_{\posterior(\sigt)}[\at(\stater)] + kl(\distq,\sigt))\right)
\end{align*}
Notice that $(\De_1(0), \De_2(0))> (0,0)$. By continuity in $k$, $\exists k > 0$ such that $(\De_1(k), \De_2(k)) > (0,0)$. For any such $k$, choose a scalar $a$ so that $$\E_{\posterior(\sigt)} [ \at(\stater) + a] + k l(\distq,\sigt) = 0.$$

\noindent
Define $$\a(\cdot) := \at(\cdot) + a$$
and let
$$\displaystyle \beta(\state) =\dfrac{ k }{(\nplayers-1)}\ind_{\state = \state^*}.$$
Now define a common-value affine coordination game $\G$ with $h(x) = x$, and $(\a,\b)$ as above, and consider the strategy profile $\strategy = \ind_{B}$.
Note that we have
\begin{align*}
    \max_{\psig \in A} \left\{\E_{\psig}[\a(\stater)] +k l(\distq,\signal) \right\} < \E_{\posterior(\sigt)}[\a(\stater)] + k l(\distq,\sigt) = 0
     &< \min_{\psig\in B} \left\{\E_{\psig}[\a(\stater)]\right\}  \\
     &\le  \min_{\psig\in B} \left\{\E_{\psig}[\a(\stater)] + k l(\distq,\signal)\right\}
\end{align*}
Further note that for any $\signal$,
$$\E_{\psig}[\b(\stater) (\nplayers-1) \cdistq^{\stater}(B)] = kl(\distq,\signal).$$
Therefore the following conditions are satisfied.

\begin{align}
    \max_{\psig \in A} \left\{\E_{\psig}[\a(\stater)] + \E_{\psig}[\b(\stater) (\nplayers-1) \cdistq^{\stater}(B)] \right\} <&   0 \notag \\
    \E_{\posterior(\sigt)}[\a(\stater)] + \E_{\posterior(\sigt)}[\b(\stater) (\nplayers-1)\distq^{\stater}_{\sigt}(B)] =& 0 \label{Equation: cutoff equilibrium IC with beta depend on state, second}\\
    \min_{\psig\in B} \left\{\E_{\psig}[\a(\stater)] + \E_{\psig}[\b(\stater) (\nplayers-1)\cdistq^{\stater}(B)]\right\} >& 0 \notag.
\end{align}
Equations~\eqref{Equation: cutoff equilibrium IC with beta depend on state, second} imply that the strategy profile $\strategy = \ind_{B}$ constitutes an equilibrium under $\distq$, i.e., $\strategy \in \eqc(\G,\distq)$. Moreover, \eqref{Equation: IC to play a=0} holds with equality for $\tilde\signal$, and strictly for any $\signal \le  \sighat$ such that $\signal \neq \sigt$. Finally, notice that, since $\dist^{\state^*}_{\tilde \signal}(B) > \distq^{\state^*}_{\tilde \signal}(B)$,
\begin{align*}
    \E_{\posterior(\tilde\signal)}[\a(\stater)] + \E_{\posterior(\sigt)}[\b(\stater)(N-1) \distq^{\stater}_{\tilde \signal}(B)] &= \E_{\posterior(\tilde\signal)}[\a(\stater)] + k \posterior(\sigt)(\state^*)\dist_{\sigt}^{\state^*}(B) \\
    &> \E_{\posterior(\tilde\signal)}[\a(\stater)] + k l(\distq,\sigt) = 0.
\end{align*}
Therefore, $\strategy \notin \eqc(\G,\dist)$.

\noindent Now, we will show that if $P(\strategy') \subset B= P(\strategy)$, then $\strategy' \notin \eqc(\G,\dist) \bigcup \eqc(\G,\distq)$. Suppose not, i.e., $P(\strategy') \subset B$. Fix a $\signal' \in B \backslash P(\strategy')$.
For $\strategy'$ to be an equilibrium, \eqref{Equation: IC to play a=0} must be satisfied at $\signal'$. However, since $\signal' \in B$,
\begin{align*}
    &  \E_{\posterior(\signal')}[\a(\stater)] + k\min\{l(\distq,\signal'), l(\dist,\signal')\}\\
    &  \ge \E_{\posterior(\signal')}[\a(\stater)] \\
     & > \E_{\posterior(\sigt)}[\a(\stater)] + k l(\distq,\sigt) \\
     & = 0.
\end{align*}

Therefore, \eqref{Equation: IC to play a=0} cannot be satisfied at $\signal'$ under $\dist$ or $\distq$. Therefore, $\eqminp(\G,\distq) = \distq(P(\strategy))$. Finally, since $\strategy \notin \eqc(\G,\dist)$, $\eqminp(\G,\dist) > \eqminp(\G,\distq)$, a contradiction.
\eprf

\subsection{Contour-CAD and Separable Games}
\label{subsection: separable}

In this section, we consider a special case of the setting of Section~\ref{sec:contourcad}, where  $\beta(\cdot)$ is independent of the state. We call such games ``Separable common value affine coordination games.'' The net payoff of a player from taking action~$a=1$ in such a game is given by
\begin{align*}
    \payoff(\actoth,\state) = \a(\state) + \b h(\actoth).
\end{align*}
for some affine, increasing $h(\cdot)$ and $\b \ge 0$.
\noindent

An analog of Theorem~\ref{Theorem: CCAD with state} holds. The only difference is that we do not need cCAD increase for each state, but only in expectations.

\bthm\label{Theorem: CCAD equivalence} Let $\dist$ and $\distq$ be two distributions with identical marginal distributions.

\begin{enumerate}
    \item $\dist \ccad \distq \implies \eqc(\G,\dist) \supseteq \eqc(\G,\distq)$ for all separable common-value affine coordination games $\G$.

    \item Suppose $\distq$ (and hence $\dist$) is affinely independent. Then $\eqc(\G,\dist)\supseteq \eqc(\G,\distq)$ for all separable common-value affine coordination games $\G$ $\implies$ $\dist \ccad \distq$.
\end{enumerate}
\ethm

\bprf[Proof of Theorem~\ref{Theorem: CCAD equivalence}]The proof approach is similar. We first prove $(1)$. Suppose that $\dist \ccad \distq$. Consider a separable common-value affine  coordination game $\G$. Let $h(x) = k x + l$ for some $k \ge 0$. Suppose that $\strategy$ is a cutoff equilibrium under $\distq$ with a cutoff $\sigt$. Therefore,
\begin{align*}
    \E\left[\payoff(\ractoth, \stater) \vert \signal,\strategy;\distq\right] \ge 0 \quad \forall \signal \ge \tilde \signal\\
    \E\left[\payoff(\ractoth,\stater) \vert \signal,\strategy;\distq\right] \le 0 \quad \forall \signal < \tilde \signal
\end{align*}
For any $\signal$,
\begin{align*}
&\E\left[ \payoff(\ractoth, \stater) \vert \signal,\strategy;\dist \right] - \E\left[ \payoff(\ractoth, \stater) \vert \signal,\strategy;\distq \right]
\\
& =\E\left[ \a(\stater)+\b h(\ractoth)\vert \signal,\strategy;\dist \right] - \left[\a(\stater)+\b h(\ractoth) \vert \signal,\strategy;\distq \right]\\
& =\b k \left(\E\left[ \ractoth\vert \signal,\strategy;\dist \right] - \E\left[\ractoth \vert \signal,\strategy;\distq \right]\right)\\
& =\b k [ \cdist(\sigtup) - \cdistq(\sigtup)]
\end{align*}
Since $\dist \ccad \distq$, the above expression is non-negative if $\signal \ge \sigt$ and non-positive otherwise. This implies that if $\strategy$ was an equilibrium under $\distq$, then it remains an equilibrium under $\dist$, because \eqref{Equation: IC to play a=1} continues to hold for all $\signal \ge \sigt$, and \eqref{Equation: IC to play a=0} continues to hold for all $\signal < \sigt$. Thus, $\strategy \in \eqc(\G,\dist)$. \smallskip

Next we prove $(2)$. Suppose that $\eqc(\G,\dist) \supseteq \eqc(\G,\distq)$ for all separable common-value affine  coordination games $\G$ but $\dist \nccad \distq$. Then there exists $\tilde \signal, \sighat$ such that at least one of the following holds:
\begin{enumerate}
    \item $\sighat \le \tilde \signal$ and $\dist_{\tilde \signal}(\sighup) < \distq_{\tilde \signal}(\sighup)$, or
    \item $\sighat \ge \tilde \signal$ and $\dist_{\tilde \signal}(\sighdn) < \distq_{\tilde \signal}(\sighdn)$.
\end{enumerate}
\smallskip
\noindent\textbf{Case 1: Suppose $\sighat \le \tilde \signal$ and $\sighat \le \tilde \signal$ and $\dist_{\tilde \signal}(\sighup) < \distq_{\tilde \signal}(\sighup)$}

\noindent Define $$A:= \{\psig: \signal \in \sigset, \signal < \sighat\}.$$ $$B:= \{ \psig: \signal\in\sigset, \signal \ge \sighat, \signal \neq \tilde \signal\}.$$ Now $A \bigcup \{\posterior(\tilde\signal)\} \bigcup B$ is an affinely independent set. By Lemma~\ref{Lemma: affine independence and separation}, $\exists \lm \in \real^\nplayers$ such that $$\max_{\psig \in A} \lm' \psig < \lm' \posterior(\tilde\signal) < \min_{\psig \in B} \lm' \psig.$$
Notice that $\lm' \posterior(\cdot) = \E_{\posterior(\cdot)} [ \lm\stater]$. Therefore, $\exists \lm : \states \to \real$ such that,
\begin{align*}
    \max_{\psig \in A} \E_{\psig}[\lm(\stater)] < \E_{\posterior(\tilde \signal)}[\lm(\stater)] < \min_{\psig \in B} \E_{\psig}[\lm(\stater)].
\end{align*}

 Define $$\a(\cdot):= \lm(\cdot) \displaystyle -\frac{\E_{\posterior(\tilde \signal)}[\lm(\stater)] + \min_{\psig \in B}\E_{\psig}[\lm(\stater)]  }{2}.$$
Notice that $\E_{\posterior(\tilde\signal)}[\a(\stater)] = \displaystyle \frac{\E_{\posterior(\tilde \signal)}[\lm(\stater)] - \min_{\psig\in B}\E_{\psig}[\lm(\stater)]}{2} < 0$.
Therefore, we have
\begin{align}
    \max_{\psig \in A} \E_{\psig}[\a(\stater)] <  \E_{\posterior(\tilde \signal)}[\a(\stater)] < 0 < \min_{\psig \in B} \E_{\psig}[\a(\stater)]. \label{Equation: separation alpha}
\end{align}
Define $$\displaystyle \b:= -\frac{\E_{\posterior(\tilde \signal)}[\a(\stater)]}{ \distq_{\tilde \signal}(\sighup)}.$$
Notice that $\b > 0$. Also, \eqref{Equation: separation alpha} and the monotonicity of
$\distq_{\signal}(\sighup)$ (in $\signal$)

imply,
\begin{align}
    \max_{\psig \in A} \left\{\E_{\psig}[\a(\stater)] + \b \cdistq(\sighup) \right\} <&   0 \notag \\
    \E_{\posterior(\tilde\signal)}[\a(\stater)] + \b \distq_{\tilde \signal}(\sighup) =& 0 \label{Equation: cutoff equilibrium IC}\\
    \min_{\psig\in B} \left\{\E_{\psig}[\a(\stater)] + \b \cdistq(\sighup)\right\} >& 0 \notag.
\end{align}
Notice that it is possible that $\cdistq(\sighup) < \distq_{\tilde \signal}(\sighup)$. However, the last inequality follows from $\E_{\posterior(\tilde \signal)}[\a(\stater)] < 0 < \min_{\psig} \E_{\psig}[\a(\stater)]$ and $\cdistq(\sighup) \ge 0$.

Consider a separable common-value affine  coordination game $\G$ with $h(x) = x$, and $(\a,\b)$ defined above. Consider a strategy profile $\strategy$ such that $P(\strategy)=\sighup$ and $NP(\strategy)=\sigset\setminus \sighup$. Then, Equations~\eqref{Equation: cutoff equilibrium IC} simply says that this strategy constitutes an equilibrium under $\distq$, i.e.,
$\strategy \in \eqc(\G,\distq)$. Moreover, \eqref{Equation: IC to play a=1} holds with an equality for $\tilde \signal$, and holds strictly for any $\signal \ge \sighat$ such that $\signal \neq \tilde\signal$. Finally, notice that $$\dist_{\tilde \signal}(\sighup) < \distq_{\tilde \signal}(\sighup) \implies \E_{\posterior(\tilde\signal)}[\a(\stater)] + \b \dist_{\tilde \signal}(\sighup) < 0.$$ Therefore, $\strategy \notin \eqc(\G,\dist)$, a contradiction. Thus, case 1 cannot hold.

\smallskip

\noindent \textbf{Case 2: $\sighat \ge \tilde \signal$ and $\dist_{\tilde \signal}(\sighdn) < \distq_{\tilde \signal}(\sighdn)$: }

\noindent Define $A:= \{ \psig: \signal \in \sigset, \signal \le \sighat, , \signal \neq \tilde\signal\}$ and $B: = \{ \psig : \signal \in \sigset, \signal > \sighat\}$.
Notice that $\dist_{\tilde \signal}(\sighdn) < \distq_{\tilde \signal}(\sighdn)$ $\Longleftrightarrow $ $\dist_{\tilde \signal}(B) > \distq_{\tilde \signal}(B)$ since $B = \sigset \backslash \{ \sighdn\}$.

Proceeding as before, we use the fact that $A \bigcup \{\posterior(\tilde\signal)\} \bigcup B$ is an affinely independent set. Therefore, by Lemma~\ref{Lemma: affine independence and separation}, $\exists \lm : \states \to \real$ such that,
\begin{align*}
    \max_{\psig \in A} \E_{\psig}[\lm(\stater)] < \E_{\posterior(\tilde\signal)}[\lm(\stater)] < 0 <  \min_{\psig \in B} \E_{\psig}[\lm(\stater)].
\end{align*}
Let $a > 0 $ be such that, for all $\signal \le \sighat, \signal \neq \tilde \signal$,
\begin{align*} a \left[ \E_{\posterior(\tilde\signal)}[ \lm(\stater) - \E_{\psig}[\lm(\stater)]\right] > \max_{\signal \le \sighat} ( \cdistq(B) - \distq_{\tilde \signal}(B)).\end{align*}
Such an $a$ exists since $\displaystyle \E_{\posterior(\tilde \signal)}[\lm(\stater)]>  \max_{\psig \in A} \E_{\psig}[\lm(\stater)].$ Using this $a$, define $\a(\cdot) = a \lm(\cdot) + b$ where $\displaystyle b:= -(a \E_{\posterior(\tilde\signal)}[\lm(\stater)] + \distq_{\tilde \signal}(B)).$ By definition, $\E_{\posterior(\tilde \signal)}[\a(\stater)] + \distq_{\tilde \signal}(B) = 0$. By the choice of $a$, for all $\signal \in A$,
\begin{align*}
0 = \E_{\posterior(\tilde\signal)}[\a(\stater)] + \distq_{\tilde \signal}(B) > \E_{\psig}[\a(\stater)] + \cdistq(B).
\end{align*}
Finally, by the monotonicity of $\cdistq(B)$ (in $\signal$), for all $\signal \in B$,
\begin{align*}
    0 = \E_{\posterior(\tilde\signal)}[\a(\stater)] + \distq_{\tilde \signal}(B) < \E_{\psig}[\a(\stater)] + \cdistq(B)
\end{align*}
Therefore, by letting $\b :=1$, we have,

\begin{align}
    \max_{\psig \in A} \left\{\E_{\psig}[\a(\stater)] + \b \cdistq(B) \right\} <&   0 \notag \\
    \E_{\posterior(\tilde\signal)}[\a(\stater)] + \b \distq_{\tilde \signal}(B) =& 0 \label{Equation: cutoff equilibrium IC, second}\\
    \min_{\psig\in B} \left\{\E_{\psig}[\a(\stater)] + \b \cdistq(B)\right\} >& 0 \notag.
\end{align}

Similar to before, consider a game $\G$ with $h(x) = x$, and $(\a,\b)$ as obtained above. Notice that the equations \eqref{Equation: cutoff equilibrium IC, second} are simply \eqref{Equation: IC to play a=1} and \eqref{Equation: IC to play a=0} for the strategy profile $\strategy = \ind_{B}$. Thus, $\strategy \in \eqc(\G,\distq)$. Moreover, \eqref{Equation: IC to play a=0} holds with an equality for $\tilde\signal$, and holds strictly for any $\signal \le  \sighat$ such that $\signal \neq \tilde\signal$. Finally, notice that $$\dist_{\tilde \signal}(B) > \distq_{\tilde \signal}(B) \implies \E_{\posterior(\tilde\signal)}[\a(\stater)] + \b \dist_{\tilde \signal}(B) > 0.$$ Therefore, $\strategy \notin \eqc(\G,\dist)$, a contradiction. Thus, case 2 cannot hold either.
This completes the proof of Theorem~\ref{Theorem: CCAD equivalence}.
\eprf

\subsection{Proof of Theorem~\ref{Theorem: sCAD equivalence}}

\bprf
The proof approach is identical to that of Theorem~\ref{Theorem: wCAD equivalence}.
First, we establish $(1) \implies (2)$. Since $\dist \scad \distq$ for all increasing $h(\cdot)$, $$\E[ h(\ractoth) \vert  \signal;\dist] \ge \E[h(\ractoth)\vert \signal;\distq] \quad \text{ for all } \signal \in \sigset.$$
Given any $\strategy$, we can replicating the same argument, to obtain:

\bigskip

\noindent
If $\b(\signal) \ge 0 \quad \forall  \signal \in \sigset$,
\[
\begin{array}{ll}
\E[ \payoff(\ractoth, \signal) \vert \signal,\strategy;\dist ] - \E[ \payoff(\ractoth, \signal) \vert \signal,\strategy;\distq ] &\ge 0   \quad \text{ if } \signal \in P(\strategy) \\
&\le 0   \quad \text{ if } \signal \in NP(\strategy)
\end{array}
\]
Therefore, for all $\G$, if $\b(\signal) \ge 0$ for all $\signal\in\sigset$, then $\strategy \in \eq(\G,\distq) \implies \strategy \in \eq(\G,\dist)$.

Next, we establish $(2) \implies (1)$. We only sketch the argument for $(2) \implies (1)$, as the logic is very similar to that in Theorem~\ref{Theorem: wCAD equivalence}. Suppose that $\eq(\G, \dist) \supseteq \eq(\G,\distq)$ for all $\G$ with $\b(\signal) \ge 0 $ for all $\signal \in \sigset$, but $\dist \nscad \distq$. Therefore, $\exists \signal^*\in \sigset$ and $K \ni \signal^*$ such that, for some $m < \nplayers -1$,
$$\dist(\{\countr(K) \ge m\}  \vert \sigr_i = \signal^*) < \distq(\{\countr(K) \ge m\} \vert \sigry_i = \signal^*).$$

We will construct a game $\G$ that exhibits strategic complementarity such that $\strategy \in \eq(\G,\distq)$ but $\strategy \notin \eq(\G,\dist)$. Using the same idea as before, let $\strategy$ be a strategy such that
\[ P(\strategy)=K \qquad NP(\strategy)=\sigset \setminus K.\]

We specify a game $\G$ for which $\strategy$ constitutes an equilibrium under $\distq$.
\begin{align*}
 h(\actoth) &= \ind_{\actoth \ge m}& \\
 \a(\state) &= -1 & \\
 \b(\state,\signal) &\begin{cases}
 = 0 &    \text{ if }  \signal \notin K\\
 = \frac{\displaystyle 1}{\displaystyle \E\left[ \ind_{\ractoth \ge m}  \bigg \vert \signal^*;\distq \right]} & \text{ if } \signal \in K , \signal = \signal^*\\
    \ge \frac{\displaystyle 1}{\displaystyle  \E\left[\ind_{\ractoth \ge m} \bigg\vert \signal; \distq \right]} & \text{ if } \signal \in K, \signal \neq \signal^*
\end{cases}
\end{align*}
It is straightforward to verify that $\strategy \in \eq(\G,\distq) \backslash \eq(\G,\dist)$, a contradiction.
\eprf

\subsection{Proof of Proposition~\ref{proposition:congestion}}

\bprf

Using a similar reasoning as in Theorem~\ref{Theorem: wCAD equivalence}, we can see that if $\b(\signal) \le 0$ for all $\signal \in \sigset$, then for strategy profile $\strategy$, the inequalities in Equation~(\ref{complementarityproof}) would be reversed, i.e.,
\newline For $\b(\signal) \le 0$,
\begin{align*}
\E[ \payoff(\ractoth, \signal) \vert \signal,\strategy;\dist ] - \E[ \payoff(\ractoth, \signal) \vert \signal,\strategy;\distq ] &\le  0 \quad \text{ if } \signal \in P(\strategy)\\
&\ge  0 \quad \text{ if } \signal \in NP(\strategy)
\end{align*}
Therefore, if $\strategy \in \eq(\dist,\G)$, then \eqref{Equation: IC to play a=1} continues to hold for all $\signal \in P(\strategy)$ under $\distq$ and \eqref{Equation: IC to play a=0} holds for all $\signal \in NP(\strategy)$ under $\distq$. Therefore, $\strategy \in \eq(\G,\distq)$.

Suppose that $\eq(\G, \distq) \supseteq \eq(\G,\dist)$ for all games $\G$ with $\b(\signal) \le 0 $ for all $\signal \in \sigset$, but $\dist \ncad \distq$. Using Lemma~\ref{Lemma: wCAD equivalent with increasing expectation} again, we know that $\exists \signal^*$ and $K \ni \signal^*$ such that,
$$\dist(\sigrs_j \in  K \vert \sigrs_i = \signal^*) < \distq (\sigrs_j \in K \vert \sigrs_i = \signal^*).$$
\noindent
Now, we define a $\strategy$ with $P(\strategy)=K$, and
a game $\G$ as follows:
\[
 \begin{array}{ll}
  \a(\signal) &= \begin{cases} 1 &  \text{if} \ \signal \in K \\ -1 & \text{if} \ s\notin K\end{cases} \\
  h(\actoth) &= \actoth\\
  \b(\signal) &  \left\{
 \begin{array}{ll}
  = 0    & \text{ if }  \signal \notin K\\
  =\frac{\displaystyle -1}{\displaystyle  \E\left[ \ractoth \bigg \vert \signal^*;\dist\right]} & \text{ if } \signal \in K \text{ and } \signal = \signal^*\\
   \ge \frac{\displaystyle - 1}{\displaystyle \E\left[\ractoth \bigg\vert \signal; \dist \right]} & \text{ if } \signal \in K \text{ and } \signal \neq \signal^*.
  \end{array}
 \right.
 \end{array}
 \]
 It is straightforward to verify that $\strategy \in \eq(\G,\dist)$, but $\strategy \notin \eq(\G,\distq)$, a contradiction.

This proves the first part of the proposition (CAD equivalence). The second part (sCAD equivalence) makes a the same modification to the reasoning in Theorem~\ref{Theorem: sCAD equivalence}, and hence omitted.
\eprf

\subsection{Proof of Proposition~\ref{Proposition: second price auction revenue}}
\bprf
Following \cite{meyer1990interdependence}, we define an elementary transformation on identical intervals (ETI) as follows: Given any $\signal < \signal'$, we increase the probabilities of $(\signal,\signal)$ and $(\signal',\signal')$ by some $\a > 0$, while the probabilities of $(\signal',\signal)$ and $(\signal,\signal')$ are decreased by $\a$ each. Proposition~1 of \cite{meyer1990interdependence} says that $\dist \cad \distq$ if and only if $\dist$ can be obtained by a finite sequence of ETIs starting from $\distq$. We now establish that any ETI increases the expected revenue. To this end, fix $\signal < \signal'$ and consider an ETI involving $\signal$ and $\signal'$ to construct $\widehat \distq$ from $\distq$.

Notice that the price of the object is $\signal$ whenever the valuations are $(\signal,\signal), (\signal,\signal')$ or $(\signal',\signal)$. Therefore,
\begin{align*}
    & R(\widehat \distq) - R(\distq) = \a (\signal' + \signal) - 2 \a (\signal) = \a ( \signal' - \signal) > 0.
\end{align*}
Hence, $R(\dist) \ge R(\distq)$ with the inequality being strict whenever $\dist \neq \distq$.
\eprf

\newpage
\section{Online Appendix}

\subsection{Non-exchangeable signal distributions}
\label{subsec:nonexchangeability}
\subsection*{\small{The Game $\Gamma$ and Information Structure~$\dist$}}
There are $\nplayers$ players, indexed by $i$ or $j\neq i$. They simultaneously and independently choose whether to act ($a_i=1$) or not ($a_i=0$). Each player $i$'s payoff depends on the \emph{weighted} aggregate action by others. That is, $\exists \lmv \in \real^\nplayers_+$ such that $$\ractoth=\sum_{j\neq i} \lm_j a_j.$$ Rest of the setup remains the same as in the main model. Let $\cdist^i(\cdot) \in \De(\sigset^{\nplayers -1})$ denote the conditional distribution of player $i$ over other players' types, given that a player $i$ has type $\signal$. Unlike the main text, $\dist$ need not be exchangeable, this conditional distribution is indexed with the player identity. For any pair of agents~$i$ and $j$, and any $T \subset \sigset$, we define $$\cdist^i(\sigr_j \in T):=Prob(\sigr_j\in T \vert \sigr_i = \signal).$$

Suppose that $\vec s=(s_1,s_2,\ldots s_\nplayers)\in \sigset^\nplayers$ is the realized type profile. A player $i$ with type $\sigrs_i=s$ get a payoff of $u(a_i=1,\ractoth,\signal)$ if she acts and $u(a_i=0,\ractoth,\signal)$ if she does not act. The net payoff from taking action
$$\payoff(\ractoth,\signal):=u(a_i=1,\ractoth,\signal)-u(a_i=0,\ractoth,\signal)=\a(\signal)+\b(\signal)h(\ractoth),$$
where $h(\cdot)$ is increasing.

\begin{definition}
    The game $\G$ exhibits \textbf{strategic complementarity} if $\beta(\cdot)\ge 0$ and \textbf{strategic substitutability} if $\beta(\cdot)\le 0$ for all $s\in \sigset$. The game is said to be \textbf{affine} if $h(\cdot)$ is affine.
\end{definition}

\bdefn[\textbf{Concentration along a Diagonal}]\label{Definition: Weak CAD-nonexch} We say that $\dist$ has a ``higher concentration along a diagonal'' than $\distq$, or $\dist$ \textbf{is CAD higher than} $\distq$, denoted by $\dist \cad \distq$, if,
\begin{enumerate}
    \item $\dist$ and $\distq$ have the same marginal distributions. And,
    \item For all $i,j\in\players, \signal \in \sigset$, and any pair of agents~$i,j$ \begin{enumerate}[(a)]
        \item $\cdist^i(\sigr_j = \signal) \ge \cdistq^i(\sigr_j = \signal)$, and
        \item $\cdist^i(\sigr_j = \signal') \le \cdistq^i(\sigr_j = \signal')$ whenever $\signal' \neq \signal$.
    \end{enumerate}
\end{enumerate}
\edefn
\noindent
If $\sigr \sim \dist$ and $\sigry \sim \distq$ with $\dist \cad \distq$, we say $\sigr \cad \sigry$, i.e., we use $\sigr \cad \sigry$ and $\dist\cad \distq$ interchangeably.

\blemma\label{Lemma: wCAD equivalent with increasing weighted expectation} Let $\sigr$ and $\sigry$ be two $\sigset^\nplayers$-valued random variables with distributions $\dist$ and $\distq$ respectively. Moreover, $\dist$ and $\distq$ have identical marginals. Then, the following are equivalent.
\begin{enumerate}
    \item $\dist \cad \distq$.
    \item For all $\signal \in \sigset$ and $K \subseteq \sigset$ such that $\signal \in K$, and $\lmv \in \real^\nplayers_+$,  $$\E\left[ \sum_{j \neq i} \lm_j \ind_{\sigr_j \in K} \bigg\vert \sigr_i = \signal \right] \ge \E\left[\sum_{j\neq i} \lm_j \ind_{\sigry_j \in K} \bigg\vert \sigry_i = \signal\right].$$
\end{enumerate}
\elemma
\bprf[Proof of Lemma~\ref{Lemma: wCAD equivalent with increasing weighted expectation}]
Fix $i$ such that $\sigr_i = \signal$. Then,
\begin{align*}
    \dist \cad \distq &\Longleftrightarrow \cdist^i(\sigr_j \in K \bigg\vert \sigr_i = \signal) \ge \cdist^i(\sigry_j \in K \bigg\vert \sigry_i = \signal) \quad \forall j, \signal, K \ni \signal, \\
    & \Longleftrightarrow \E\left[ \lm_j \ind_{\sigr_j \in K} \bigg\vert \sigr_i = \signal\right] \ge \E\left[ \lm_j \ind_{\sigry_j \in K} \bigg\vert \sigry_i = \signal\right] \quad \text{for any } \lm_j\in \real_+, \forall \signal, K \ni \signal. \\
    & \implies  \E\left[ \sum_{j\neq i} \lm_j \ind_{\sigr_j \in K} \bigg\vert \sigr_i = \signal\right] \ge  \E\left[\sum_{j\neq i}  \lm_j \ind_{\sigry_j \in K} \bigg\vert \sigry_i = \signal\right] \text{for any } \lm_j\in \real_+, \forall  \signal, K \ni \signal
    .
\end{align*}
For the reverse, suppose that $\dist \ncad \distq$. Therefore, $\exists i,j, \signal, K \ni \signal$, such that
\begin{align*}
    \cdist^i(\sigr_j \in K) < \cdistq^i(\sigr_j \in K).
\end{align*}
Then, let $\lmv$ be defined by $\lm_k = \ind_{k \in \{i,j\}}$. Then,
\begin{align*}
    \E\left[ \sum_{j\neq i} \lm_j \ind_{\sigr_j \in K} \bigg\vert \sigr_i = \signal\right] <  \E\left[\sum_{j\neq i}  \lm_j \ind_{\sigry_j \in K} \bigg\vert \sigry_i = \signal\right]
\end{align*}contradicting the hypothesis.
\eprf

\bthm\label{Theorem: wCAD equivalence without exch} Let $\dist$ and $\distq$ be two joint distributions over $\sigset^N$ and $h(\cdot)$ is affine and increasing. The following are equivalent.

\begin{enumerate}
    \item $\dist \cad \distq$.
    \item $\eq(\G, \dist) \supseteq \eq(\G,\distq)$ for all $\G$ that exhibits strategic complementarity.
    \item $\eq(\G, \dist) \subseteq \eq(\G,\distq)$ for all $\G$ that exhibits strategic substitutability.
\end{enumerate}
\ethm

\bprf[Proof of Theorem~\ref{Theorem: wCAD equivalence without exch}]
The proof of $(1) \implies (2), (3)$ is identical to the proof of Theorem~\ref{Theorem: wCAD equivalence} with the only difference being that the conditional distributions now have to be indexed by the player identity $i$. Therefore, we omit it.

\noindent
Now, we show $(2) \implies (1)$ and $(3) \implies (1)$. Suppose that $\eq(\G, \dist) \supseteq \eq(\G,\distq)$ for all $\G$ with $\b(\signal) \ge 0 $ for all $\signal \in \sigset$, but $\dist \ncad \distq$. Therefore, $\exists i,j$ a $\signal \in \sigset$, and $K \subseteq \sigset$ with $\signal \in K$, such that, $\cdist^i(\sigr_j \in K) < \cdistq^i(\sigr_j \in K).$ Let $\lmv$ be defined by $\lm_k = \ind_{k \in \{i,j\}}$. Then, this game is essentially a $2-$player symmetric game. Therefore, replicating the construction from the proof of Theorem~\ref{Theorem: wCAD equivalence} with $\nplayers = 2$ and using Lemma~\ref{Lemma: wCAD equivalent with increasing weighted expectation}, we obtain the desired equivalence.
\eprf

\subsection{Relationship of CAD with other orders}
\label{subsection: relation to other orders}
Below, we discuss how our proposed orders---strong CAD, CAD, and contour CAD---relate to well-known orders like the supermodular order or the concordance order. In two dimensions, the supermodular and the concordance orders are equivalent \cite[see][]{meyer2012increasing}. For more than two dimensions, the supermodular order is strictly stronger than the concordance order, i.e., $\dist \sm \distq \implies \dist \conc \distq$, where $\sm$ and $\conc$ denote the supermodular and the concordance orders respectively.

\bprop\label{Proposition: relationship with sm and conc}
Suppose $\dist$ and $\distq$ are joint distributions over $\sigset^\nplayers$-valued, exchangeable random variables.
\begin{enumerate}
    \item For $N=2$, $\dist \ccad \distq$ $\implies$ $\dist \sm \distq$. But $\exists$ $\dist, \distq$ such that $\dist \sm \distq$ but $\dist \nccad \distq$.
    \item For $N>2$, $\cad$ and $\sm$ are not nested. Also, $\ccad$ and $\sm$ are not nested.
\end{enumerate}
\eprop

\bprf[Proof of Proposition~\ref{Proposition: relationship with sm and conc}]
Proposition~\ref{Proposition: relation between orders} established that $\dist \cad \distq$ $\implies \dist \ccad \distq$. We first prove part $(1)$.
Suppose that $\dist \ccad \distq$. Consider any $\signal,\signal' \in \sigset$. Let $\signal' \ge \signal$ wlog. Then,
\begin{align*}
    \dist(\sigr_1 \le \signal, \sigr_2 \le \signal') & = \sum_{\hat \signal \le \signal} \dist_{\hat \signal}(\sigr_2 \le \signal') \dist(\sigr_1 = \hat \signal) \\
    & \ge  \sum_{\hat \signal \le \signal} \distq_{\hat \signal}(\sigr_2 \le \signal') \distq(\sigr_1 = \hat \signal)\\
    & =\distq(\sigr_1 \le \signal, \sigr_2 \le \signal')
\end{align*}
where the inequality is due to the fact that $\dist \ccad \distq$ and the marginal distributions of $\dist$ and $\distq$ coincide. Similarly,
\begin{align*}
    \dist(\sigr_1 \ge \signal, \sigr_2 \ge \signal') & = \sum_{\hat \signal \ge \signal'} \dist_{\hat \signal}(\sigr_1 \ge \signal) \dist(\sigr_2 = \hat \signal) \\
    & \ge  \sum_{\hat \signal \ge \signal'} \distq_{\hat \signal}(\sigr_1 \ge \signal) \distq(\sigr_2 = \hat \signal)\\
    & =\distq(\sigr_1 \ge \signal, \sigr_2 \ge \signal')
\end{align*} where, again, the inequality is due to the fact that $\dist \ccad \distq$ and the marginal distributions of $\dist$ and $\distq$ coincide. Therefore, $\dist \sm \distq$, and hence, $\dist \conc \distq$.
\medskip

Next, we prove part (2) by providing an example.\footnote{This example is due to Margaret Meyer and Bruno Strulovici, and can be found \href{https://people.maths.bris.ac.uk/~mb13434/prst_talks/M_Meyer_150522_PrSt_Bristol.pdf}{here}.}
Let $\sigset=\{0,1\}$ and $N=3$. Consider $\sigset^3$-valued random variable~$\vec \sigr:=(\sigr_1, \sigr_2, \sigr_3)$. Consider  two different joint distributions over $\vec \sigr$ denoted by~$\dist$ and $\distq$ as described in Table~\ref{Table: supermodular and weak-CAD} below.
\begin{table}[h!]
\centering
\begin{tabular}{|c|c|c|c|c|}
\hline
$\signal_1$ & $\signal_2$ & $\signal_3$ & $\dist$     & $\distq$     \\ \hline
0     & 0     & 0     & \rule{0pt}{2.5ex} $\frac{1}{3}$ & 0       \\ \hline
0     & 0     & 1     & \rule{0pt}{2.5ex} 0       & $\frac{1}{4}$ \\ \hline
0     & 1     & 0     & \rule{0pt}{2.5ex} 0       & $\frac{1}{4}$ \\ \hline
1     & 0     & 0     & \rule{0pt}{2.5ex} 0       & $\frac{1}{4}$ \\ \hline
0     & 1     & 1     & \rule{0pt}{2.5ex} $\frac{1}{6}$ & 0       \\ \hline
1     & 0     & 1     & \rule{0pt}{2.5ex} $\frac{1}{6}$ & 0       \\ \hline
1     & 1     & 0     & \rule{0pt}{2.5ex} $\frac{1}{6}$ & 0       \\ \hline
1     & 1     & 1     & \rule{0pt}{2.5ex} $\frac{1}{6}$ & $\frac{1}{4}$       \\ \hline
\end{tabular}
\caption{$\dist \cad \distq$ but $\dist$ and $\distq$ not $\sm$-ranked}
\label{Table: supermodular and weak-CAD}
\end{table}

Let us first show that $\dist \cad \distq$ but $\dist \notsm \distq$. Notice that, for any $i,j$ such that $i\neq j$ and $a \in \{0,1\}$, $\dist(\sigrs_i  = a \vert \sigrs_j = a) = \frac13 > \frac14 = \distq(\sigrs_i = a \vert \sigrs_j =a)$. Therefore, $\dist \cad \distq$. Consider the supermodular function $f(s) = \ind_{s = (1,1,1)}$. Then, $\E_\dist[f(\vec \sigrs)]] < \E_\distq[f(\vec \sigrs)]$. Also, consider $g(x) = \ind_{x = (0,0,0)}$. Then, $\E_\dist[g(\vec \sigrs)] > \E_\distq[g(\vec \sigrs)]$. Therefore, $\dist \notsm \distq$.

To show that the orders are not nested, however, we also need to show that there are exchangeable distributions $\dist,\distq$ such that $\dist \sm \distq$ but $\dist \ncad \distq$ for more than $2$ dimensions. The example below, in the spirit of the one in Figure~\ref{fig:example}, establishes this. Let $\vec \sigr$ be a $\{1,2,3\}^3-$valued random variable. Suppose that $\distq$ is the joint distribution of $\vec \sigr$ with $\sigr_i$'s being independent and uniformly distributed. Now, consider the following operations for some small $\a > 0$ (so that probabilities stay non-negative):
\begin{itemize}
    \item If the realization is $(2,2,2)$, reduce the mass by $6\a$.
    \item For any realization that is a permutation of $(2,1,2)$, increase the mass by $2 \a$.
    \item For any realization that is a permutation of $(2,2,3)$, increase the mass by $2 \a$.
    \item For any realization that is a permutation of $(1,2,3)$, reduce the mass by $\a$.
\end{itemize}
It is easy to check that $\dist(\sigr_2 = 2 \vert \sigr_1 = 2) < \distq(\sigr_2 = 2 \vert \sigr_1 = 2)$. Therefore, $\dist \ncad \distq$. In fact, $\dist \nccad \distq$ too since $$\dist(\sigr_2 \in \{1,2\} \vert \sigr_1 = 2) < \distq(\sigr_2 \in \{1,2\} \vert \sigr_1 = 2).$$ Finally, to check that $\dist \sm \distq$ necessitates proving that $\dist$ can be obtained from $\distq$ through ``elementary transformations'' as defined in \citet{meyer2015beyond}. This is indeed the case, however, we omit the details here.

\eprf

\subsection{Common learning and similarity of information}\label{Section: awaya krishna example}

\cite{awaya2022common} show how more correlation in players' signals impedes common learning thereby hindering coordination. We briefly discuss their leading example here to illustrate the key contrast.

There is a binary state of the world $\state \in \{G,B\}$. There are two players, and two periods. In each period, each agent receives some $\{0,1\}$-valued signal about the state. Signal realization of $1$ is conclusive of state $G$ while $0$ can arise in both states. Signals are independent across periods but can be correlated across players within a given period. At the end of period $2$, they decide simultaneously  and privately whether to invest at a cost $c$ or not. When $\state = B$, investment yields no returns, while in state $G$, an investment yields a return of $1$ if the other player invests. Thus, both players would like to coordinate to invest in state $G$ and not invest in state $B$.

They point out that when the signals are conditionally independent across players, there are cost parameters and signal structures such that investing upon receiving at least one $1$ in two periods is an equilibrium. Subsequently, for some such cost and signal structures, they do the analogous exercise as ours: they keep the marginal distribution unchanged and make the signals more similar (in the CAD or PQD order) within a period. Yet, the unique equilibrium is that no player invests regardless of the number of $1$s they receive!

The source of this finding is that increasing similarity \emph{within a period} does not imply that the vector of signals across the two periods becomes more similar. In particular, it is possible that when a player receives exactly one $1$ and one $0$, he assigns a \emph{higher} probability to the other agent receiving both zeroes with signals being more similar within a period. That is, we can view the two players' signals across two periods as a $\{(0,0),(0,1), (1,0), (1,1)\}^2$-valued random variable. Then, $Prob(\sigr_2 \in \{(0,1),(1,0),(1,1)\}\vert \sigr_1 = (0,1))$ may decrease with signals being more similar within a period. Here, $\sigr_i$ is player $i$'s signal in two periods.

\subsection{Information similarity and rationalizability}\label{rationalizability}
In this section, we explore the effect of increases in similarity in the CAD orders on the set of rationalizable actions  (rather than the set of equilibria).  Below, we consider the a private-value example, and show that any action that is rationalizable for a type continues to be rationalizable if information becomes more similar in the sense of CAD.

Let  $\sigset$ be the set of types and $\distq$ be the joint distribution of types. A mapping
$x:\sigset \to \mathbbm R$
describes the payoff parameter of interest to a player. Player~$i$ knows her own payoff parameter~$x(\signal_i)$ but may not know the other player's payoff parameter~$x(\signal_j)$.

$$\distq_\signal(T):=Prob(\sigrs_j\in T \vert \sigrs_i = \signal)$$
describes player $i$ of type $\signal$'s belief about the other player's type.

Each player decide whether or not to invest, and payoffs are as follows.
\medskip
\begin{center}
\begin{tabular}{|c|c|c|}
\hline
& Invest & Don't \\
\hline
Invest & $x(\signal_1), x(\signal_2)$ & $x(\signal_1)-1, 0$ \\
\hline
Don't & $0, x(\signal_2)-1$ & $0, 0$\\
\hline
\end{tabular}
\end{center}\medskip

\begin{proposition}
    \label{proposition:commonbelief}
    If $\dist\cad \distq$, then any action $a_i$ that is rationalizable for a type $\signal_i$ of player $i$ under $\distq$ remains rationalizable under $\dist$.
\end{proposition}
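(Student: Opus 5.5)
The plan is to reduce the rationalizability claim to the equilibrium-inclusion result of Theorem~\ref{Theorem: wCAD equivalence}, using the lattice structure of this binary-action supermodular game. The example is a two-player private-value coordination game. Its payoff difference is
\[
\payoff(a_j,\signal_i)=x(\signal_i)-1+a_j ,
\]
so it is an affine private-value coordination game with $\a(\signal)=x(\signal)-1$, $\b(\signal)=1\ge 0$ and $h(a)=a$. Hence step (1)$\Rightarrow$(2) of Theorem~\ref{Theorem: wCAD equivalence} applies: every symmetric pure-strategy equilibrium under $\distq$ remains one under $\dist$. That step uses only Lemma~\ref{Lemma: wCAD equivalent with increasing expectation}, not Assumption~\ref{Assumption: signals are payoff relevant}, so payoff relevance of types is not needed.

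The first step characterizes the rationalizable actions under an arbitrary exchangeable $\distq$ via the extremal equilibria. Start from the profile in which every type of both players invests, and iterate best responses, breaking indifference toward investing. Because $\b\ge 0$, the best-response map is monotone in the opponent's strategy (pointwise order on $\{0,1\}^\sigset$). The iterates therefore decrease weakly. Since $\sigset$ is finite they stabilize after finitely many rounds at a profile $\bar\strategy^{\distq}$.

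By exchangeability and the symmetry of payoffs, every iterate is symmetric. Hence $\bar\strategy^{\distq}$ is a symmetric pure-strategy equilibrium, i.e.\ $\bar\strategy^{\distq}\in\eq(\G,\distq)$. A standard induction shows that every strategy surviving $k$ rounds of deletion of never-best responses is pointwise below the $k$-th iterate. So if Invest is rationalizable for type $\signal$, then $\bar\strategy^{\distq}(\signal)=1$. Symmetrically, iterating down from the all-Don't profile gives a smallest equilibrium $\underline\strategy^{\distq}\in\eq(\G,\distq)$. If Don't is rationalizable for $\signal$, then $\underline\strategy^{\distq}(\signal)=0$.

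The second step transfers this to $\dist$. Suppose Invest is rationalizable for $\signal_i$ under $\distq$. Then $\bar\strategy^{\distq}(\signal_i)=1$, and by Theorem~\ref{Theorem: wCAD equivalence} we have $\bar\strategy^{\distq}\in\eq(\G,\dist)$. Every action played in a Bayes--Nash equilibrium is rationalizable, since the equilibrium strategy profile is a self-justifying best-reply set. Hence Invest is rationalizable for $\signal_i$ under $\dist$. The argument for Don't is identical, using $\underline\strategy^{\distq}$.

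The main obstacle is the first step: the bound that all rationalizable actions lie between the extremal equilibria, together with the symmetry of those equilibria. Symmetry is essential because Theorem~\ref{Theorem: wCAD equivalence} speaks only about symmetric equilibria. I would handle it by the explicit monotone iteration above, checking at each round two things. First, exchangeability makes $\distq_\signal(\cdot)$ the same for both players, so the two players' iterates coincide. Second, the tie-breaking rule keeps the map monotone, so the limit is a fixed point and not merely a bound.
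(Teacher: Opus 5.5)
Your proof is correct, and it takes a genuinely different route from the paper's.

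\textbf{The paper's route.} The paper never passes through equilibria. It invokes the common $p$-belief characterization of \cite{morris2016common}: Invest is rationalizable for $\signal_i$ under $\distq$ iff $\signal_i$ lies in every iterated set $R_i^{k}(\distq)$, where
$R_i^{k+1}(\distq)=B_i^{1-x}([B_*^{1-x,1-x}]^k(\sigset^2\vert\distq))$.
It then argues level by level that these sets weakly expand when $\distq$ is replaced by $\dist$. The key point is that each iterated set contains the type being evaluated, so CAD raises that type's conditional probability of the set.

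\textbf{Your route.} Your decreasing best-response iteration from the all-Invest profile generates exactly these sets, $P(\strategy^k)=R_i^k(\distq)$; the intersection with the previous level is automatic by monotonicity. Rather than comparing the iterates across the two distributions, you use only their limit. The limit is a symmetric equilibrium under $\distq$. By step (1)$\Rightarrow$(2) of Theorem~\ref{Theorem: wCAD equivalence} it remains an equilibrium under $\dist$, and equilibrium actions are rationalizable.

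\textbf{What each buys.} Your approach is modular. It derives the rationalizability bound directly instead of importing the common-belief characterization. It also shows that equilibrium-set inclusion transfers to rationalizability in these binary-action supermodular games. The same argument goes through unchanged for $N$ players with affine $h$. It also covers general increasing $h$ under sCAD via Theorem~\ref{Theorem: sCAD equivalence}, since any conjecture supported on surviving actions has, realization by realization, no more investors than opponents with types in $P(\strategy^k)$. The paper's approach yields a finer conclusion. Every finite level of rationalizability, equivalently every level of iterated $(1-x)$-belief, expands under CAD, not only the limit. That is what links the result to Remark~\ref{remark:commonbelief}.

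\textbf{One detail to state explicitly.} In the lower iteration, which climbs up from the all-Don't profile, ties must be broken toward Don't. Otherwise the bound ``Don't rationalizable for $\signal$ implies $\underline{\strategy}^{\distq}(\signal)=0$'' can fail at indifferent types.
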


\begin{proof}

Consider $\phi_i:\sigset\to \mathbbm R$. Given the information structure $\distq$, an event $E=E_1\times E_2$ for $E_i\subseteq \sigset$ is said to be $(\phi_1,\phi_2)-$ believed if  each player $i$ of type $\signal_i$ believes that the probability of the event $E$ is at least $\phi_i(\signal_i)$. Formally, define
    $$B_i^{\phi_i}(E|\distq):=\{(\signal_1,\signal_2)|\signal_i\in E_i, \distq_{s_i}(E_j)
    \geq \phi_i(\signal_i)\}.$$

    Then, the set of states where the event $E$ is $(\phi_1,\phi_2)-$believed is
    $$B_*^{\phi_1,\phi_2}(E|\distq)=B_1^{\phi_1}(E|\distq)\bigcap B_2^{\phi_2}(E|\distq)$$
    There is a common $(\phi_1,\phi_2)-$ belief of the event $E$ if it is $(\phi_1,\phi_2)-$ believed that it is $(\phi_1,\phi_2)-$ believed, and so on. The set of states where the event $E$ is common $(\phi_1,\phi_2)-$ believed is
    $$C^{\phi_1,\phi_2}(E|\distq):=\bigcap_{n\geq 1} [B_*^{\phi_1,\phi_2}]^n(E|\distq)$$

It follows from Proposition 1 in \cite{morris2016common} that under the information structure $\distq$, for any type $\signal_i$ of player $i$, the action invest is rationalizable if $\signal_i\in C_i^{1-x,1-x}(T|\distq)$ and the action not invest is rationalizable if $\signal_i\in C_i^{x,x}(T|\distq)$. The argument is as follows.

Let $R_i^1(\distq)$ be the set of types such that investment is level 1 rationalizable for player $i$. If $x(\signal_i)< 0$, player $i$ gets a strictly higher payoff from not investing regardless of what the opponent does. Note that
$\distq_{s_i} (\sigset) \geq 1-x(\signal_i)$ iff $x(\signal_i)\geq 0$ (since player $i$ assigns probability $1$ to event $\sigset$).  Therefore,
$$R_i^1(\distq)=B_i^{1-x}(\sigset|\distq).$$
The action invest is rational for both players when $\sigset^2$ is $(1-x,1-x)-$ believed. We write this set of types as $B_*^{1-x,1-x}(\sigset^2|G)=B_*^{1-x,1-x}(\sigset^2|\distq)_1\times B_*^{1-x,1-x}(\sigset^2|\distq)_2$.

Let $R_i^2(\distq)$ be the set of types for which investment is level 2 rationalizable. This is the set of types $\signal_i\in B_*^{1-x,1-x}(\sigset^2|\distq)_i$ such that
$\distq_{s_i} (B_*^{1-x,1-x}(\sigset^2|\distq)_j) \geq 1-x(\signal_i)$.
Since player $j$ will not invest if $\signal_j\notin B_*^{1-x,1-x}(\sigset^2|\distq)_j$, the payoff from invest is at most $x(\signal_i)-(1-\distq_{s_i} (B_*^{1-x,1-x}(\sigset^2|\distq)_j))$. Therefore, if
$$
\distq_{\signal_i} (B_*^{1-x,1-x}(\sigset^2|\distq)_j)< 1-x(\signal_i),$$
player $i$ of type $\signal_i$ strictly prefers not invest over invest. Therefore,
$$R_i^2(\distq)=B_i^{1-x}(B_*^{1-x,1-x}(\sigset^2|\distq)).$$
Iterating this argument, we get
$$R_i^{k+1}(\distq)=B_i^{1-x}([B_*^{1-x,1-x}]^k(\sigset^2|\distq)).$$
The action invest is rationalizable for both players if it is $k-$th level rationalizable for all $k$. In other words, exactly when $\sigset^2$ is common $(1-x,1-x)-$ believed:
$$R^\infty(\distq)=C^{1-x,1-x}(\sigset^2|G).$$
By a symmetric argument, action not invest is rationalizable exactly when $\sigset^2$ is common $(x,x)-$ believed.

Note that for any information structure $\distq$, $[B_*^{1-x,1-x}]^k(\sigset^2|\distq)_1=[B_*^{1-x,1-x}]^k(\sigset^2|\distq)_2$ for any $k=1,2,\ldots \infty$. By definition, $F\cad G$ implies
for any $\signal_i\in [B_*^{1-x,1-x}]^k(\sigset^2|\distq)_i$,
$$\dist_{\signal_i} ([B_*^{1-x,1-x}]^k(\sigset^2|\distq)_j)\geq \distq_{\signal_i} ([B_*^{1-x,1-x}]^k(\sigset^2|\distq)_j)$$

Therefore,
$$B_*^{1-x,1-x}(\sigset^2|\distq)= B_*^{1-x,1-x}(\sigset^2|\dist)$$
$$B_i^{1-x}(B_*^{1-x,1-x}(\sigset^2|\distq))\subseteq B_i^{1-x}(B_*^{1-x,1-x}(\sigset^2|\dist))$$
$$\ldots$$
$$B_i^{1-x}([B_*^{1-x,1-x}]^k(\sigset^2|\distq))\subseteq B_i^{1-x}([B_*^{1-x,1-x}]^k(\sigset^2|\dist))$$
$$\implies \bigcap_{k\geq 1} [B_*^{1-x,1-x}]^k(\sigset^2|\distq) \subseteq  \bigcap_{k\geq 1} [B_*^{1-x,1-x}]^k(\sigset^2|\dist)$$
$$\text{or,} \ C^{1-x,1-x}(\sigset^2|\distq) \subseteq  C^{1-x,1-x}(\sigset^2|\dist)$$
$$\text{or,} \ R^\infty (\distq) \subseteq R^\infty (\dist).$$
This means if the action invest is rationalizable for a player $i$ of type $\signal_i$ under information structure $\distq$, then it remains rationalizable under $\dist\cad \distq$. Analogous arguments can be made for the action ``not invest'' as well.
\end{proof}

\subsection{Two-player Bank Run Example}
\label{sec:example}
Two agents each decide whether to stay (S) or run (R) on a bank. The payoff from running is $0$. The payoff from staying is $\theta$ if the other player stays, and $\theta-1$ if the other runs, where $\theta$ is the underlying unknown fundamental state~$\theta \in \{\frac{3}{2},\frac{1}{2}, -\frac{1}{2}\}$. The payoff matrix is as follows.
\begin{center}
\begin{tabular}{|c|c|c|}
\hline
& Stay & Run \\
\hline
Stay & $\theta, \theta$ & $\theta-1, 0$ \\
\hline
Run & $0, \theta-1$ & $0, 0$\\
\hline
\end{tabular}
\end{center}
Note that if $\theta=\frac{3}{2}$ then regardless of the other player's action, a player gets a higher payoff from staying. On the other hand, if $\theta=-\frac{1}{2}$ then regardless of the other player's action, a player gets a higher payoff from running.

In the introduction, we show that in this example, increasing correlation of signals by introducing positive $\alpha$ (see Figure \ref{fig:example}) can eliminate the maximal run equilibrium. Below, we assume a parametric information structure to illustrate how the set of equilibria changes as we increase $\alpha$.

To ensure that there is a range of $\alpha$ for which the minimal run equilibrium exists but the maximal run equilibrium does not exists, we need to introduce some asymmetry either in the prior or the information structure. We do this for the prior while keep the information structure symmetric. The state $\theta$ is drawn from the prior: $P(\theta=-\frac{1}{2})=\epsilon$, $P(\theta=\frac{3}{2})=2\epsilon$, and $P(\theta=\frac{1}{2})=1-3\epsilon$. Each player $i$ sees a signal $s_i\in\{\frac{3}{2},\frac{1}{2},-\frac{1}{2}\}$ drawn from a marginal distribution $P(s_i=\theta|\theta)=p$ and $P(s_i=\theta'|\theta)=\frac{1-p}{2}$ for $\theta'\neq \theta$.
Consider the following change in the signal structure, which makes the two signals more correlated. Suppose that the joint distribution of signals for state $\theta=\frac{1}{2}$ is as follows.
\[
\begin{array}{c|ccc|c}
\text{marg} & \frac{1-p}{2}  & p & \frac{1-p}{2}  & 1 \\ \hline
\frac{3}{2} & (\frac{1-p}{2})^2 -\color{red}{\alpha} & p(\frac{1-p}{2})+\color{blue}{\alpha} & (\frac{1-p}{2})^2 & \frac{1-p}{2} \\
\frac{1}{2} & p(\frac{1-p}{2}) +\color{blue}{\alpha} & p^2- \color{red}{2 \alpha}  & p(\frac{1-p}{2}) +\color{blue}{\alpha} & p \\
-\frac{1}{2} & (\frac{1-p}{2})^2 & p(\frac{1-p}{2}) +\color{blue}{\alpha} & (\frac{1-p}{2})^2 -\color{red}{\alpha} & \frac{1-p}{2}  \\
\hline
 s_1/ s_2 & -\frac{1}{2} & \frac{1}{2} & \frac{3}{2} & \text{marg} \\
\end{array}
\]
Note that when the state is $\theta=\frac{1}{2}$, the two signals are more likely to be high together or low together. Therefore, this change also means the signals are more interdependent according to supermodular or PQD order. $\alpha$ is a non-negative constant. If $\alpha=0$, then the signals are conditionally independent. A higher $\alpha$ makes the signals more interdependent. For simplicity, let us assume the joint distributions for state $\theta=\frac{3}{2}$ and $\theta=-\frac{1}{2}$ are conditionally independent.

We assume that $p>\frac{3-3\epsilon}{3-\epsilon}$, which ensures that when players see extreme signals, the updated beliefs are sufficiently high such that $R$ is strictly dominated for an agent with signal $s_i=\frac{3}{2}$ and $S$ is strictly dominated for an agent with signal $s_i=-\frac{1}{2}$.\footnote{An agent $i$ with signal $s_i$ believes that $P(\theta=\frac{3}{2}|s_i=\frac{3}{2})>P(\theta=-\frac{1}{2}|s_i=-\frac{1}{2})=\frac{p\epsilon}{p\epsilon+\frac{1-p}{2}(1-\epsilon)}>\frac{3}{4}$ when $p>\frac{3-3\epsilon}{3-\epsilon}$, which ensures the above dominated strategies.} The agent who receives a signal $s_i=\frac{1}{2}$ may or may not stay depending on her belief about the other player. Given an info structure $(\epsilon, p, \alpha)$, there may exist two symmetric BNE:
\begin{enumerate}
    \item Good Eqm: S if $s_i=\frac{3}{2}, \frac{1}{2}$ and R if $s_i=-\frac{1}{2}$ and
    \item Bad Eqm: S if $s_i=\frac{3}{2}$ and R if $s_i=\frac{1}{2}, -\frac{1}{2}$
\end{enumerate}
Let $e$ be a symmetric BNE, $e_G$ be the good eqm and $e_B$ be the bad eqm, and $\mathcal E$ be the set of symmetric BNE.

For the bad equilibrium $e_B$ to exist, it must be that the agent who sees signal $\frac{1}{2}$ wants to run if he believes that the other player plays the $e_B$ strategy. Note that for any state $\theta$, if an agent believes that the other player will stay with probability $q$, then her expected payoff from staying is $\theta q+(\theta-1)(1-q)=\theta+q-1$.
Therefore, $e_B$ exists if
$$\sum_\theta P \left(\theta|s_i=\frac{1}{2}\right) \left(\theta+P\left(s_j\in \{\frac{3}{2}\}|s_i=\frac{1}{2},\theta\right)-1\right)\leq 0$$
$$\mathbbm E \left[\theta|s_i=\frac{1}{2}\right]+\sum_\theta P \left(\theta|s_i=\frac{1}{2}\right) P\left(s_j\in \{\frac{3}{2}\}|s_i=\frac{1}{2},\theta\right) -1\leq 0.$$
Note that the LHS increases in $\alpha$. There exists $\alpha^*$ such that $e_B$ is an equilibrium iff $\alpha\leq \alpha^{*}$.

Similarly, for $e_G$ to exist, it must be that the agent who sees signal $\frac{1}{2}$ wants to stay if he believes that the other player plays the $e_G$ strategy.
One can show that there exists $\alpha^{**}>\alpha^*$ such that both the good and the bad equilibria $e_B,e_G$ exists for $\alpha\leq \alpha^*$ and only the good equilibrium $e_G$ exists for $\alpha\in[\alpha^*,\alpha^{**}]$.\footnote{$e_G$ exists if
$\mathbbm E \left[\theta|s_i=\frac{1}{2}\right]+\sum_\theta P \left(\theta|s_i=\frac{1}{2}\right) P\left(s_j\in \{\frac{3}{2},\frac{1}{2}\}|s_i=\frac{1}{2},\theta\right) -1\geq 0.$
Since the LHS decreases in $\alpha$, there is $\alpha^{**}$ such that $e_G$ exists for $\alpha\leq \alpha^{**}$. One can check that
$\alpha^{*}=-\left[\mathbbm E \left[\theta|s_i=\frac{1}{2}\right]+ P\left(s_j=\frac{3}{2}|s_i=\frac{1}{2}\right) -1\right] \frac{p}{P\left[\theta=\frac{1}{2}|s_i=\frac{1}{2}\right]}$
and
$\alpha^{**}=\left[\mathbbm E \left[\theta|s_i=\frac{1}{2}\right]- P\left(s_j=-\frac{1}{2}|s_i=\frac{1}{2}\right) \right] \frac{p}{P\left[\theta=\frac{1}{2}|s_i=\frac{1}{2}\right]}$
where the probabilities are based on the joint distribution before the change (conditionally independent). Finally, because apriori,  $\state$ is more likely to be $\frac{3}{2}$ than $-\frac{1}{2}$, $\mathbbm E \left[\theta|s_i=\frac{1}{2}\right]>\frac{1}{2}$ and $P\left(s_j=\frac{3}{2}|s_i=\frac{1}{2}\right)>P\left(s_j=-\frac{1}{2}|s_i=\frac{1}{2}\right)$, which ensures $\alpha^{**}>\alpha^*$.}

\begin{figure}
    \centering
\begin{tikzpicture}[x=10cm,y=1cm]

  \draw[thick] (0,0) -- (1,0);

  \fill (1/3,0) circle (1.2pt);
  \draw (1/3,0.08) -- (1/3,-0.08) node[below=2pt] {$\alpha^{*}$};

  \fill (1,0) circle (1.2pt);
  \draw (1,0.08) -- (1,-0.08) node[below=2pt] {$\alpha^{**}$};

  \node[above=6pt] at (1/6,0) {$e_G, \ e_B$};
  \node[above=6pt] at (2/3,0) {$e_G$};

\end{tikzpicture}
\caption{Correlation and Equilibrium}
\label{fig:CorrEqm}
\end{figure}
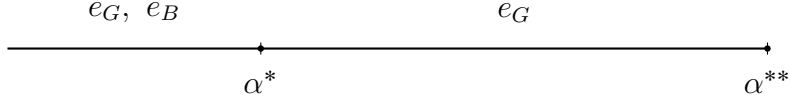

Consider a policymaker who does not want the players to run on the bank. Let $\mathcal R(e)$ be the expected number of players who run in equilibrium $e$.

Suppose that the policymaker anticipates that the player will play the worst equilibrium. Let us define the maximal expected run as
$$\mathcal R:=\max_{e\in \mathcal E} \mathcal R(e).$$
Note that the maximal expected run only depends on the marginal distribution of signal and not the joint distribution.\footnote{$\mathcal R(e_G)=2\times P\left(s_i=-\frac{1}{2},s_j=-\frac{1}{2}\right)+2\times 1\times P\left(s_i=-\frac{1}{2},s_j\neq -\frac{1}{2}\right)=2\times P\left(s_i=-\frac{1}{2}\right).$
Similarly, $\mathcal R(e_B)=2\times P\left(s_i\in\{-\frac{1}{2},\frac{1}{2}\}\right).$} Therefore, the impact of $\alpha$ on maximal expected run is only through the equilibrium. Recall that the bad equilibrium $e_B$ does not exist when $\alpha$ becomes sufficiently high ($\alpha>\alpha^*$). This means the maximal expected run $\mathcal R$ falls when $\alpha$ crosses the threshold $\alpha^*$. Thus, surprisingly, in a canonical model, increasing correlation of information decreases the maximal expected number of people that would run on the bank!

\end{document}